\begin{document}
\title{An Abductive Framework for Horn Knowledge Base Dynamics}
\author{Radhakrishnan Delhibabu}

\institute{
Informatik 5, Knowledge-Based Systems Group\\
RWTH Aachen, Germany\\
\email{delhibabu@kbsg.rwth-aachen.de}} \maketitle

\begin{abstract}

The dynamics of belief and knowledge is one of the major components
of any autonomous system  that should be able to incorporate new
pieces of information. We introduced the Horn knowledge base
dynamics to deal with two important points: first, to handle belief
states that need not be deductively closed; and the second point is
the ability to declare certain parts of the belief as immutable. In
this paper, we address another, radically new approach to this
problem. This approach is very close to the Hansson's dyadic
representation of belief. Here, we consider the immutable part as
defining a new logical system. By a logical system, we mean that it
defines its own consequence relation and closure operator. Based on
this, we provide an abductive framework for Horn knowledge base
dynamics.

\vspace{0.5cm}

\textbf{Keyword}: AGM, Immutable, Integrity Constraint, Knowledge
Base Dynamics, Abduction.
\end{abstract}
\section{Introduction}
Over the last three decades \cite{Den}, abduction has been embraced
in AI as a non-monotonic reasoning paradigm to address some of the
limitations of deductive reasoning in classical logic. The role of
abduction has been demonstrated in a variety of applications. It has
been proposed as a reasoning paradigm in AI for diagnosis, natural
language understanding, default reasoning, planning, knowledge
assimilation and belief revision, multi-agent systems and other
problems (see \cite{Sch}).

In the concept of knowledge assimilation and belief revision (see
\cite{Pag}), when a new item of information is added to a knowledge
base, inconsistency can result. Revision means modifying the Horn
knowledge base in order to maintain consistency, while keeping the
new information and removing (contraction) or not removing the least
possible previous information. In our case, update means revision
and contraction, that is insertion and deletion in database
perspective. Our previous work \cite{Arav1,Arav} makes connections
with contraction from Horn knowledge base dynamics.

Our Horn knowledge base dynamics is defined in two parts: an
immutable part (formulae or sentences) and updatable part (literals)
(for definition and properties see works of Nebel \cite{Nebel} and
Segerberg \cite{Seg}). Horn knowledge bases have a set of integrity
constraints (see the definitions in later section). In the case of
finite Horn knowledge bases, it is sometimes hard to see how the
update relations should be modified to accomplish certain knowledge
base updates.

\begin{example} \label{E1} Consider a database with an (immutable)
rule that a staff member is a person who is currently working in a
research group under a chair. Additional (updatable) facts are that
matthias and gerhard are group chairs, and delhibabu and aravindan
are staff members in group info1. Our first integrity constraint
(IC) is that each research group has only one chair ie. $\forall
x,y,z$ (y=z) $\leftarrow$ group\_chair(x,y) $\wedge$
group\_chair(x,z). Second integrity constraint is that a person can
be a chair for only one research group ie. $\forall x,y,z$
(y=z)$\leftarrow$ group\_chair(y,x) $\wedge$ group\_chair(z,x).
\end{example}

\begin{center}
\underline {Immutable part}: staff\_chair(X,Y)$\leftarrow$
staff\_group(X,Z),group\_chair(Z,Y). \vspace{0.5cm}

\underline{Updatable part}: group\_chair(infor1,matthias)$\leftarrow$ \\
\hspace{2.4cm}group\_chair(infor2,gerhard)$\leftarrow$ \\
\hspace{2.6cm}staff\_group(delhibabu,infor1)$\leftarrow$ \\
\hspace{2.6cm}staff\_group(aravindan,infor1)$\leftarrow$ \\
\end{center}
Suppose we want to update this database with the information,
staff\_chair({delhiba\-bu},{aravindan}), that is

\begin{center}
staff\_chair(\underline{delhibabu},\underline{aravindan})$\leftarrow$
staff\_group(\underline{delhibabu},Z) $\bigwedge$
group\_chair(Z,\underline{aravindan})
\end{center}

If we are restricted to definite clauses, there is only one
plausible way to do this: delhibabu and aravindan belong to groups
infor1, this updating means that we need to delete (remove) matthias
from the database and newly add (insert) aravindan to the database
(aravindan got promoted to the chair of the research group infor1
and he was removed from research group infor1). This results in an
update that is too strong. If we allow disjunctive information into
the database, however, we can accomplish the update by minimal
adding wrt consistency
\begin{center}
staff\_group(\underline{delhibabu},infor1) $\lor$
group\_chair(infor1,\underline{aravindan})
\end{center}
and this option appears intuitively to be correct.

When adding new beliefs to the Horn knowledge base, if the new
belief is violating integrity constraints then belief revision needs
to be performed, otherwise, it is simply added. As we will see, in
these cases abduction can be used in order to compute all the
possibilities and it is \emph{not up to user or system} to choose
among them.

When dealing with the revision of a Horn knowledge base (both
insertions and deletions), there are other ways to change a Horn
knowledge base and it has to be performed automatically also.
Considering the information, change is precious and must be
preserved as much as possible. The \emph{principle of minimal
change} \cite{Herz,Schul} can provide a reasonable strategy. On the
other hand, practical implementations have to handle contradictory,
uncertain, or imprecise information, so several problems can arise:
how to define efficient change in the style of AGM \cite{Alch}; what
result has to be chosen \cite{Lak,Lobo,Nayak1}; and finally,
according to a practical point of view, what computational model to
support for Horn knowledge base revision has to be provided?

Since Horn knowledge base change is one of the main problems arising
in knowledge representation, it has been tackled according to
several points of view. In this article, we consider the immutable
part as defining a new logical system. By a logical system, we mean
that it defines its own consequence relation and closure operator.
Based on this, we provide an abductive framework for belief dynamics
(see \cite{Alis1,Bou,Wall}).

The rest of paper is organized as follows: First we start with
preliminaries along with the concept of logical system and
properties of consequences operator. In Section 3, we introduce Horn
knowledge base dynamics with our logical system. In Section 4, we
explore the relationship of Horn knowledge base dynamics with
coherence approach. In Section 5, we present how Horn knowledge base
dynamics can be realized using abductive explanations. In Section 6,
we give brief overview of related works. In Section 7, we make
conclusions with a summary of our contribution as well as a
discussion of future directions of investigation.


\section{Preliminaries}

A first order language consists of an alphabet $\mathcal{A}$ of a
language $\mathcal{L}$. We assume a countable universe of variables
\textmd{Var}, ranged over x,y,z, and a countable universe of
relation (i.e predicate) symbols, ranged over by $\mathcal{A}$. The
following grammar defines \textmd{FOL}, the language of first order
logic with equality and binary relations:

\begin{center}
$\varphi ::=$ $x=x$ $|$ $a(x,x)$ $|$ $\neg\varphi$ $|$ $\bigvee
\phi$ $|$ $\bigwedge\phi$ $|$ $\exists X:\phi$.
\end{center}

Here $\phi\subseteq FOL$ and $X\subseteq Var$ are finite sets of
formulae and variables, respectively.

\begin{definition}[Normal Logic Program \text{[22]}]
By an alphabet $\mathcal{A}$ of a language $\mathcal{L}$ we mean
disjoint sets of constants, predicate symbols, and function symbols,
with at least one constant. In addition, any alphabet is assumed to
contain a countably infinite set of distinguished variable symbols.
A term over $\mathcal{A}$ is defined recursively as either a
variable, a constant or an expression of the form $f(t_1,...,t_n)$
where f is a function symbol of $\mathcal{A}$, n its arity, and the
$t_i$ are terms. An atom over  $\mathcal{A}$ is an expression of the
form $P(t_1,...,t_n)$ where P is a predicate symbol of $\mathcal{A}$
and the $t_i$ are terms. A literal is either an atom A or its
default negation not A. We dub default literals those of the form
not A. A term (resp. atom, literal) is said ground if it does not
contain variables. The set of all ground terms (resp. atoms) of
$\mathcal{A}$ is called the Herbrand universe (resp. base) of
$\mathcal{A}$. A Normal Logic Program is a possibly infinite set of
rules (with no infinite descending chains of syntactical dependency)
of the form:
\begin{center}
$H\leftarrow B_1,...,B_n,not~C_1,...,not~C_m,~(with~m,n \geq 0 ~and~
finit)$
\end{center}

Where H, the $B_i$ and the $C_j$ are atoms, and each rule stands for
all its ground instances. In conformity with the standard
convention, we write rules of the form $H\leftarrow$ also simply as
H (known as fact). An NLP P is called definite if none of its rules
contain default literals. H is the head of the rule r, denoted by
head(r), and body(r) denotes the set $\{B_1,...,B_n,not~C_1,...,not~
C_m\}$ of all the literals in the body of r.
\end{definition}

When doing problem  modeling with logic programs, rules of the form

\begin{center}
$\bot\leftarrow B_1,...,B_n,not~C_1,...,not~C_m,~(with~m,n \geq 0
~and~ finit)$
\end{center} with a non-empty body are known as a type of integrity constraints
(ICs), specifically denials, and they are normally used to prune out
unwanted candidate solutions. We abuse the $\emph{not}$ default
negation notation applying it to non-empty sets of literals too: we
write not S to denote $\{not ~ s: s\in S\}$, and confound $not~ not~
a\equiv a$. When S is an arbitrary, non-empty set of literals
$S=\{B_1,...,B_n,not~C_1,...,not~ C_m\}$ we use:

\begin{enumerate}
\item[-] $S^+$ denotes the set $\{B_1,\ldots,B_n\}$ of positive literals in $S$ .
\item[-] $S^-$ denotes the set $\{not~ C_1,\ldots, not~ C_m\}$ of negative literals
in $S$ .
\item[-] $|S| = S^+ \cup (not~ S^-)$ denotes the set
$\{B_1,\ldots,B_n,C_1,\ldots ,C_m\}$ of atoms of $S$.
\end{enumerate}

As expected, we say a set of literals $S$ is consistent iff $S^+
\cap |S^-| = \emptyset$. We also write $heads(P)$ to denote the set
of heads of non-IC rules of a (possibly constrained) program $P$,
i.e., $heads(P) = \{head(r) : r \in P\} \backslash \{\bot\}$, and
$facts(P)$ to denote the set of facts of $P$  - $facts(P) =
\{head(r) : r\in P \land body(r) =\emptyset\}$.

\begin{definition}[Level mapping\text{[4]}] Let P be a normal logic program and $B_{P}$ its
Herbrand base. A \emph{level} mapping for P is a function
$\parallel: B_{P} \rightarrow \mathbb{N}$ of ground atoms to natural
numbers. The mapping $\parallel$ is also extended to ground literals
by assigning $\mid\neg A \mid$ = $\mid A\mid$ for all ground atoms
$A\in B_{P}$. For every literal ground L, $\mid L \mid$ is called as
the \emph{level} of L in P.
\end{definition}

\begin{definition}[Acyclic program \text{[4]}] Let P be a normal logic program and $\parallel$ a level mapping
for P.  P is called as acyclic with respect to $\parallel$ if for
every ground clause $H\leftarrow L_{1},...,L_{n}~(with~n \geq 0
~and~ finit)$ in P the level of A is higher then the level of every
$L_{i}$ (1 $\leq $i$ \leq$ n). Moreover P is called acyclic if P is
acyclic with respect to some level mapping for P.
\end{definition}

Unlike Horn knowledge base dynamics, where knowledge is defined as a
set of sentences, here we wish to define a Horn knowledge base KB
wrt a language $\mathcal{L}$, as an abductive framework
$<P,Ab,IC,K>$, where,

\begin{enumerate}
\item[*] $P$ is an acyclic normal logic program with all abducibles in P at level 0
and no non-abducible at level 0. $P$ is referred to as a
\emph{logical system}. This in conjunction with the integrity
constraints corresponds to immutable part of the Horn knowledge
base, here $P$ is defined by immutable part. This is discussed
further in the next subsection;

\item[*] $Ab$ is a set of atoms from $\mathcal{L}$, called the \emph{abducibles}. This notion is required
in an abductive framework, and this corresponds to the atoms that
may appear in the updatable part of the knowledge;

\item[*] $IC$ is the set of \it integrity constraints, \rm a set of sentences from language $\mathcal{L}$.
This specifies the integrity of a Horn knowledge base and forms a
part of the knowledge that can not be modified over time;

\item[*] $K$ is a set of sentences from $\mathcal{L}$. It is the \emph{current knowledge}, and the only part of
$KB$ that changes over time. This corresponds to the updatable part
of the Horn knowledge base. The main requirement here is that no
sentence in $K$ can have an atom that does not appear in $Ab$.
\end{enumerate}

\subsection{Logical system}

The main idea of our approach is to consider the immutable part of
the knowledge to define a new logical system. By a logical system,
we mean that $P$ defines its own consequence relation $\models_{P}$
and its closure $Cn_{p}$. Given $P$, we have the Herbrand Base
$HB_{P}$ and $G_{P}$, the ground instantiation of $P$.

An \it abductive interpretation \rm $I$ is a set of abducibles, i.e.
$I\subseteq Ab$. How $I$ interprets all the ground atoms of $L$
\footnotemark \footnotetext{the set of all the ground atoms of $L$,
in fact depends of $L$, and is given as $HB_{P}$, the Herbrand Base
of P}  is defined, inductively on the level of atoms wrt $P$, as
follows:

\begin{enumerate}
\item[*] An atom $A$ at level 0 (note that only abducibles are at level 0) is interpreted as: $A$ is \it true
\rm in I iff $A\in I$, else it is \it false \rm in $I$.
\item[*] An atom $A$ at level $n$ is interpreted as: $A$ is true in $I$ iff $\exists$ clause $A\leftarrow L_{1},\ldots,L_{k}$
in $G_{P}$ s.t. $\forall L_{j}\;(1\leq j\leq k)$ if $L_{j}$
 is an atom then $L_{j}$ is true in $I$, else if $L_{j}$ is a negative literal
$\neg B_{j}$, then $B_{j}$ is false in I.
\end{enumerate}

This interpretation of ground atoms can be extended, in the usual
way, to interpret  sentences in $L$, as follows (where $\alpha$ and
$\beta$ are sentences):
\begin{enumerate}
\item[*] $\neg \alpha$ is true in $I$ iff $\alpha$ is false in $I$.
\item[*] $\alpha \land  \beta$ is true in $I$ iff both $\alpha$ and $\beta$ are true in $I$.
\item[*] $\alpha \lor  \beta$ is true in $I$ iff either $\alpha$ is true in $I$ or $\beta$ is true in $I$.
\item[*] $\forall \alpha$ is true in $I$ iff all ground instantiations of $\alpha$ are true in $I$.
\item[*] $\exists \alpha$ is true in $I$ iff some ground instantiation of $\alpha$ is true in $I$.
\end{enumerate}

Given a sentence $\alpha$ in $L$, an abductive interpretation $I$ is
said to be an \it abductive model \rm of $\alpha$ iff $\alpha$ is
true in $I$. Extending this to a set of sentences $K$, $I$ is a
abductive model of $K$ iff $I$ is an abductive model of every
sentence $\alpha$ in $K$.

Given a set of sentences $K$ and a sentence $\alpha$, $\alpha$ is
said to be a $P$-\it consequence \rm of $K$, written as $K
\models_{P} \alpha$, iff every abductive model of $K$ is an
abductive model of $\alpha$ also. Putting it in other words, let
$Mod(K)$ be the set of all abductive models of $K$. Then $\alpha$ is
a $P$-consequence of $K$ iff $\alpha$ is true in all abductive
interpretations in $Mod(K)$. The \it consequence operator \rm
$Cn_{P}$ is then defined as $Cn_{P}(K)=\{\alpha~|~K\models _{P}
\alpha\}=\{\alpha~|~\alpha ~ \text{\rm is true in all abductive
interpretations in} ~ Mod(K)\}$. K is said to be P-\it consistent
\rm iff there is no expression $\alpha$ s.t. $\alpha \in Cn_{P}(K)$
and $\neg \alpha\in Cn_{P}(K)$. Two sentences $\alpha$ and $\beta$
are said to be $P$-\it equivalent \rm to each other, written as
$\alpha \equiv \beta$, iff they have the same set of abductive
models , i.e. $Mod(\alpha)=Mod(\beta)$.

\subsection{Properties of consequences operator}

Since a new consequence operator is defined, it is reasonable, to
ask whether it satisfies certain properties that are required in the
Horn knowledge base dynamics context. Here, we observe that all the
required properties, listed by various researchers in Horn knowledge
base dynamics, are satisfied by the defined consequence operator.
The following propositions follow from the above definitions, and
can be verified easily.

\begin{center} $Cn_{P}$ satisfies \it inclusion, i.e. $K\subseteq Cn_{P}(K)$. \end{center}
  \begin{center} $Cn_{P}$ satisfies \it iteration, i.e. $Cn_{P}(K)=Cn_{P}(Cn_{P}(K))$. \end{center}

Anther interesting property is \it monotony, \rm i.e. if $K\subseteq
K'$, then $Cn_{P}(K)\subseteq Cn_{P}(K')$. $Cn_{P}$ satisfies
monotony. To see this, first observe that $Mod(K')\subseteq Mod(K)$.

$Cn_{P}$ satisfies \it superclassicality \rm, i.e. if $\alpha$ can
be derived from K by first order classical logic, then $\alpha \in
Cn_{P}(K)$.

$Cn_{P}$ satisfies \it deduction \rm, i.e. if $\beta \in
Cn_{P}(K\cup \{\alpha\})$, then $(\beta\leftarrow \alpha)\in Cn(K)$.

$Cn_{P}$ satisfies \it compactness \rm, i.e. if $\alpha \in
Cn_{P}(K)$, then $\alpha\in Cn_{P}(K')$ for some finite subset $K'$
of $K$.

\subsection{Statics of a Horn knowledge base}
The statics of a Horn knowledge base $KB$, is given by the current
knowledge K and the integrity constraints $IC$. An abductive
interpretation $M$ is an abductive model of $KB$ iff it is an
abductive model of $K\cup IC$. Let $Mod(KB)$ be the set of all
abductive models of $KB$. The \it belief set \rm represented by
$KB$, written as $KB^{\bullet}$ is given as,
$$KB^{\bullet}=Cn_{P}(K\cup IC)=\{\alpha| \alpha \; \text{\rm is true in
every abductive model of}\; KB \}.$$ A belief (represented by a
sentence in $\mathcal{L}$) $\alpha$ is \emph{accepted} in $KB$ iff
$\alpha\in KB^{|bullet}$ (i.e. $\alpha$ is true in every model of
$KB$). $\alpha$ is \it rejected \rm in $KB$ iff $\neg \alpha \in
KB^{\bullet}$ (i.e. $\alpha$ is false in every model of $KB$). Note
that there may exist a sentence $\alpha$ s.t. $\alpha$ is neither
accepted nor rejected in $KB$ (i.e. $\alpha$ is true in some but not
all models of $KB$), and so $KB$ represents a partial description of
the world.

Two Horn knowledge bases $KB_{1}$ and $KB_{2}$ are said to be \it
equivalent \rm to each other, written as $KB_{1}\equiv KB_{2}$, iff
they are based on the same logical system and their current
knowledge are $P$-equivalent, i.e. $P_{1}=P_{2},\;Ab_{1}=Ab_{2},\;
IC_{1}=IC_{2}$ and $K_{1}\equiv K_{2}$. Obviously, two equivalent
Horn knowledge bases $KB_{1}$ and $KB_{2}$ represent the same belief
set, i.e. $KB^{\bullet}_{1}=KB^{\bullet}_{2}$.


\section{Horn knowledge base dynamics}
In AGM \cite{Alch} three kinds of belief dynamics are defined:
expansion, contraction and revision. We consider all of them, one by
one, in the sequel.

\subsection{Expansion}

Let $\alpha$ be new information that has to be added to a knowledge
base $KB$. Suppose $\neg \alpha$ is not accepted in $KB$. Then,
obviously $\alpha$ is $P$ - consistent with $IC$, and $KB$ can be
\it expanded \rm by $\alpha$, by modifying $K$ as follows:
$$KB+\alpha\equiv <P, Ab, IC, K\cup \{\alpha\}>$$
Note that we do not force the presence of $\alpha$ in the new $K$,
but only say that $\alpha$ must be in the belief set represented by
the expanded Horn knowledge base. If in case $\neg \alpha$ is
accepted in $KB$ (in other words, $\alpha$ is inconsistent with IC),
then expansion of $KB$ by $\alpha$ results in a inconsistent Horn
knowledge base with no abductive models, i.e.
$(KB+\alpha)^{\bullet}$ is the set of all sentences in
$\mathcal{L}$.

Putting it in model-theoretic terms, $KB$ can be expanded by a
sentence $\alpha$, when $\alpha$ is not false in all models of $KB$.
The expansion is defined as:
$$Mod(KB+\alpha)=Mod(KB)\cap Mod(\alpha).$$

If $\alpha$ is false in all models of $KB$, then clearly
$Mod(KB+\alpha)$ is empty, implying that expanded Horn knowledge
base is inconsistent.

\subsection{Revision}

As usual, for revising and contracting a Horn knowledge base, the
rationality of the change is discussed first. Later a construction
is provided that complies with the proposed rationality postulates.

\newpage

\begin{center} \textbf {Rationality postulates} \end{center}

Let $KB=<P,Ab,IC,K>$ be revised by a sentence $\alpha$ to result in
a new Horn knowledge base $KB\dotplus\alpha=<P',Ab',IC',K'>$.

When a Horn knowledge base is revised, we do not (generally) wish to
modify the underlying logical system P or the set of abducibles
$Ab$. This is refereed to as \it inferential constancy \rm by
Hansson \cite{Hans1,Hans2}.

\begin{enumerate}
\item[$(\dotplus1)$] (\it Inferential constancy) $P'=P$ and $Ab'=Ab$,$IC'=IC$.
\item[$(\dotplus 2)$] (\it Success)\rm $\alpha$ is accepted in $KB\dotplus\alpha$ , i.e. $\alpha$ is true in all models of $KB\dotplus\alpha$.
\item[$(\dotplus 3)$] (\it Consistency) \rm $\alpha$ is satisfiable and $P$-consistent with IC iff $KB\dotplus \alpha$ is P-consistent,
i.e. $Mod(\{\alpha\}\cup IC)$ is not empty iff $Mod(KB\dotplus
\alpha)$ is not empty.
\item[$(\dotplus 4)$] (\it Vacuity) \rm If $\neg \alpha$ is not accepted in KB, then $KB\dotplus \alpha\equiv KB+\alpha$, i.e. if $\alpha$
is not false in all models of KB, then $Mod(KB\dotplus
\alpha)=Mod(KB) \cap Mod(\alpha)$.
\item[$(\dotplus 5)$] (\it Preservation)\rm If $KB \equiv KB'$ and $\alpha\equiv \beta$, then $KB\dotplus\alpha\equiv KB'\dotplus\beta$, i.e.
if $Mod(KB)=Mod(KB')$ and $Mod(\alpha)=Mod(\beta)$, then
$Mod(KB\dotplus\alpha)=Mod(KB\dotplus\beta)$.
\item[$(\dotplus 6)$] (\it Extended Vacuity 1)\rm $(KB\dotplus\alpha)+\beta$ implies $KB\dotplus(\alpha \land \beta)$, i.e.
$(Mod(KB\dotplus \alpha)\cap Mod(\beta))\subseteq
Mod(KB\dotplus(\alpha \land \beta))$.
\item[$(\dotplus 7)$] (\it Extended Vacuity 2)\rm If $\neg \beta$ is not accepted in $(KB\dotplus \alpha)$, then
$KB\dotplus(\alpha\land \beta)$ implies $(KB\dotplus \alpha)+\beta$,
i.e. if $\beta$ is not false in all models of $KB\dotplus\alpha$,
then $Mod(KB\dotplus(\alpha\land\beta))\subseteq
(Mod(KB\dotplus\alpha)\cap Mod(\beta))$.
\end{enumerate}

\begin{center}\textbf{Construction}  \end{center}
Let $\mathcal{S}$ stand for the set of all abductive interpretations
that are consistent with $IC$, i.e. $\mathcal{S}=Mod(IC)$. We do not
consider abductive interpretations that are not models of $IC$,
simply because $IC$ does not change during revision. Observe that
when $IC$ is empty, $\mathcal{S}$ is the set of all abductive
interpretations. Given a Horn knowledge base $KB$, and two abductive
interpretations $I_{1}$ and $I_{2}$ from $\mathcal{S}$, we can
compare how close these interpretations are to $KB$ by using an
order $\leq_{KB}$ among abductive interpretations in $\mathcal{S}$.
$I_{1}<_{KB}I_{2}$ iff $I_{1}\leq_{KB}I_{2}$ and $I_{2}\nleq_{KB}
I_{1}$.

Let $\mathcal{F}\subseteq \mathcal{S}$. An abductive interpretation
$I\in \mathcal{F}$ is minimal in $\mathcal{F}$ wrt $\leq_{KB}$ if
there is no $I'\in \mathcal{F}$ s.t. $I'<_{KB}I$. Let,
$Min(\mathcal{F},\leq_{KB})=\{I~|~I~\text{\rm is minimal in}~\\
\mathcal{F}~wrt~\leq_{KB}\}$.

For any Horn knowledge base KB, the following are desired properties
of $\leq_{KB}$:
\begin{enumerate}
\item[($\leq 1$)] (\it Pre-order)\rm $\leq_{KB}$ is a \it pre-order \rm, i.e. it is transitive and reflexive.
\item[($\leq 2$)] (\it Connectivity)\rm $\leq_{KB}$ is \it total \rm in $\mathcal{S}$, i.e. $\forall I_{1},I_{2}\in \mathcal{S}$:
either $I_{1}\leq_{KB} I_{2}$ or $I_{2}\leq_{KB} I_{1}$.
\item[($\leq 3$)] (\it Faithfulness)\rm $\leq_{KB}$ is \it faithful \rm to KB, i.e. $I \in Min(\mathcal{S},\leq_{KB})$ iff
$I \in Mod(KB)$.
\item[($\leq 4$)] (\it Minimality)\rm For any non-empty subset $\mathcal{F}$ of $\mathcal{S}$, $Min(\mathcal{F},\leq_{KB})$
is not empty.
\item[($\leq 5$)] (\it Preservance)] \rm For any Horn knowledge base KB', if $KB\equiv KB'$ then $\leq_{KB}=\leq_{KB'}$.
\end{enumerate}

Let $KB$ (and consequently $K$) be revised by a sentence $\alpha$,
and $\leq_{KB}$ be a rational order that satisfies $(\leq 1)$ to
$(\leq 5)$. Then the abductive models of the revised Horn knowledge
base are given precisely by: $Min(Mod(\{\alpha\}\cup
IC),\leq_{KB})$. Note that, this construction does not say what the
resulting K is, but merely says what should be the abductive models
of the new Horn knowledge base.

\begin{center} \textbf{Representation theorem} \end{center}

Now, we proceed to show that revision of $KB$ by $\alpha$, as
constructed above, satisfies all the rationality postulates
stipulated in the beginning of this section. This is formalized by
the following lemma.

\begin{lemma} Let $KB$ be a Horn knowledge base, $\leq_{KB}$ an order
among $\mathcal{S}$  that satisfies $(\leq 1)$ to $(\leq 5)$. Let a
revision operator $\dotplus$ be defined as: for any sentence
$\alpha$, $Mod(KB\dotplus \alpha)=Min(Mod(\{\alpha\}\cup
IC),\leq_{KB})$. Then $\dotplus$ satisfies all the rationality
postulates for revision $(\dotplus 1)$ to $(\dotplus 7)$.

\begin{proof}~
\begin{enumerate}
\item[$(\dotplus 1)$] $P'=P$ and $Ab'=Ab$ and $IC'=IC$\\ This is
satisfied obviously, since our construction does not touch $P$ and
$Ab$, and $IC$ follows from every abductive interpretation in
$Mod(KB\dotplus \alpha)$.
\item[$(\dotplus 2)$] $\alpha$ is accepted in $KB\dotplus \alpha$\\
Note that every abductive interpretation $M\in Mod(KB+\alpha)$ is a
model of $\alpha$. Hence $\alpha$ is accepted in $KB\dotplus
\alpha$.
\item[$(\dotplus 3)$] $\alpha$ is satisfiable and $P$-consistent with IC iff $KB\dotplus \alpha$ is
$P$-consistent.\\
If part: If $KB\dotplus \alpha$ is $P$-consistent , then
$Mod(KB\dotplus \alpha)$ is not empty. This implies that
$Mod(\{\alpha\}\cup IC)$ is not empty, and hence $\alpha$ is
satisfiable and $P$-consistent with $IC$. \\ Only if part: If
$\alpha$ is satisfiable and P-consistent with $IC$, then
$Mod(\{\alpha\}\cup IC)$ is not empty, and $(\leq 4)$ ensures that
$Mod(KB\dotplus \alpha)$ is not empty. Thus, $KB\dotplus \alpha$ is
$P$-consistent.
\item[$(\dotplus 4)$] If $\neg \alpha$ is not accepted in $KB$, then $KB\dotplus \alpha\equiv
KB+\alpha$.\\
We have to establish that  $Min(Mod(\{\alpha\}\cup
IC),\leq_{KB})=Mod(KB)\cap Mod(\alpha)$. Since $\neg \alpha$ is not
accepted in KB, $Mod(KB)\cap Mod(\alpha)$ is not empty. The required
result follows immediately from the fact that $\leq_{KB}$ is
faithful to KB (i.e. satisfies $\leq 3$), which selects only and all
those models of $\alpha$ which are also models of KB.
\item[$(\dotplus 5)$] If $KB \equiv KB'$ and $\alpha\equiv \beta$ then $KB\dotplus \alpha=KB'\dotplus
\beta$\\
$(\leq 5)$ ensures that $\leq_{KB}=\leq_{KB'}$. The required result
follows immediately from this and the fact that
$Mod(\alpha)=Mod(\beta)$.
\item[$(\dotplus 6)$] $(KB\dotplus \alpha)+\beta$ implies $KB\dotplus (\alpha\land
\beta)$.\\
We consider this in two cases. When $\neg \beta$ is accepted in
$KB\dotplus \alpha$, $(KB\dotplus \alpha)+\beta$ is the set of all
sentences from $\mathcal{L}$, and the postulate follows immediately.
Instead when $\neg \beta$ is not accepted in $KB\dotplus \alpha$,
this postulates coincides with the next one.

\item[$(\dotplus 7)$] If $\neg \beta$ is not accepted in $KB\dotplus \alpha$, then $KB\dotplus (\alpha\land \beta)$ implies  $(KB\dotplus
\alpha)+\beta$.\\
Together with the second case of previous postulate, we need to show
that $KB\dotplus (\alpha\land \beta)=(KB\dotplus \alpha)+\beta$. In
other words, we have to establish that  $Min(Mod(\{\alpha\land
\beta\}\cup IC),\leq_{KB})=Mod(KB\dotplus \alpha)\cap Mod(\beta)$.
For the sake of simplicity, let us represent $Min(Mod(\{\alpha\land
\beta\}\cup IC),\leq_{KB})$ by P, and $Mod(KB\dotplus \alpha)\cap
Mod(\beta)$, which is the same as $Min(Mod(\{\alpha\}\cup
IC),\leq_{KB})\cap Mod(\beta)$, by Q. The required result is
obtained in two parts:
\begin{enumerate}
\item[1)] $\forall$ (abductive interpretation)M: if $M\in P$, then $M\in
Q$\\
Obviously $M\in Mod(\beta)$. Assume that $M\notin
Min(Mod(\{\alpha\}\cup IC),\leq_{KB})$. This can happen in two
cases, and we show that both the cases lead to contradiction. \\
Case A: No model of $\beta$ is selected by $\leq _{KB}$ from
$Mod(\{\alpha\}\cup IC)$. But this contradicts our initial condition
that $\neg \beta$ is not accepted in $KB\dotplus \alpha$.\\
Case B: Some model, say $M'$, of $\beta$ is selected by $\leq_{KB}$
from $Mod(\{\alpha\}\cup IC)$. Since M is not selected, it follows
that $M'<_{KB}M$. But then this contradicts our initial assumption
that $M\in P$. So, $P\subseteq Q$.
\item[2)] $\forall$ (abductive interpretation)M: if $M\in Q$, then $M\in
P$\\
$M \in Q$ implies that $M$ is a model of both $\alpha$ and $\beta$,
and $M$ is selected by $\leq_{KB}$ from $Mod(\{\alpha\}\cup IC)$.
Note that $Mod(\{\alpha\land \beta\}\cup IC)\subseteq
Mod(\{\alpha\}\cup IC)$. Since $M$ is selected by $\leq_{KB}$ in a
bigger set (i.e. $Mod(\{\alpha\}\cup IC)$), $\leq_{KB}$ must select
$M$ from its subset $Mod(\{\alpha\land \beta\}\cup IC)$ also. Hence
$Q\subseteq P$. $\blacksquare$
\end{enumerate}
\end{enumerate}
\end{proof}
\end{lemma}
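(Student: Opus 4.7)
The plan is to verify the seven revision postulates one by one against the construction $Mod(KB\dotplus\alpha)=Min(Mod(\{\alpha\}\cup IC),\leq_{KB})$, leaning on properties $(\leq 1)$--$(\leq 5)$. The first three postulates amount to bookkeeping: $(\dotplus 1)$ is immediate since the construction fixes only the set of models and leaves $P$, $Ab$, $IC$ untouched; $(\dotplus 2)$ holds because every element of $Min(Mod(\{\alpha\}\cup IC),\leq_{KB})$ lies in $Mod(\alpha)$ by definition; and $(\dotplus 3)$ follows from minimality $(\leq 4)$, which guarantees non-emptiness of the minima whenever $Mod(\{\alpha\}\cup IC)$ is non-empty, the converse being trivial.

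For $(\dotplus 4)$ I would use faithfulness $(\leq 3)$, which identifies $Mod(KB)$ with $Min(\mathcal{S},\leq_{KB})$. Since $\neg\alpha$ is not accepted, $Mod(KB)\cap Mod(\alpha)$ is non-empty; fixing some $M^{*}$ in this intersection, $M^{*}$ is globally $\leq_{KB}$-minimal and hence minimal among models of $\{\alpha\}\cup IC$. Conversely, any $M\in Min(Mod(\{\alpha\}\cup IC),\leq_{KB})$ satisfies $M\leq_{KB}M^{*}$ (by minimality), and since $M^{*}$ is globally minimal, $M$ must be globally minimal too, so $(\leq 3)$ places $M$ in $Mod(KB)$. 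Postulate $(\dotplus 5)$ is immediate afterwards: preservance $(\leq 5)$ gives $\leq_{KB}=\leq_{KB'}$, and $Mod(\alpha)=Mod(\beta)$ makes the two minimizations coincide.

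The main obstacle is the supplementary pair $(\dotplus 6)$--$(\dotplus 7)$. I would first dispose of $(\dotplus 6)$ in the case where $\neg\beta$ is accepted in $KB\dotplus\alpha$: the expansion $(KB\dotplus\alpha)+\beta$ is then inconsistent and trivially implies everything. What remains, covering both $(\dotplus 6)$ and $(\dotplus 7)$, is to prove
\[
Min(Mod(\{\alpha\wedge\beta\}\cup IC),\leq_{KB}) = Min(Mod(\{\alpha\}\cup IC),\leq_{KB})\cap Mod(\beta)
\]
whenever $\neg\beta$ is not accepted in $KB\dotplus\alpha$. The inclusion $\supseteq$ is easy: any competitor defeating a right-hand-side $M$ inside $Mod(\{\alpha\wedge\beta\}\cup IC)$ would also defeat it inside the larger $Mod(\{\alpha\}\cup IC)$, contradicting minimality there.

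The crux is the reverse inclusion. Take $M$ minimal for $\{\alpha\wedge\beta\}\cup IC$; membership in $Mod(\beta)$ is automatic, and I need $M$ minimal in the larger $Mod(\{\alpha\}\cup IC)$. Suppose for contradiction some $M'\in Mod(\{\alpha\}\cup IC)$ satisfies $M'<_{KB}M$. Non-acceptance of $\neg\beta$ in $KB\dotplus\alpha$ hands me an $M''\in Min(Mod(\{\alpha\}\cup IC),\leq_{KB})\cap Mod(\beta)$; by connectivity $(\leq 2)$ combined with $M''$'s minimality we get $M''\leq_{KB}M'$, and transitivity $(\leq 1)$ then yields $M''<_{KB}M$, making $M''$ a model of $\alpha\wedge\beta$ strictly below $M$ --- contradicting the choice of $M$. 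Orchestrating connectivity, transitivity and the witness supplied by the vacuity hypothesis in this last step is where the real work concentrates.
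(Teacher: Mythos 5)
Your proposal is correct and follows essentially the same route as the paper: postulate-by-postulate verification, with the supplementary postulates reduced to the identity $Min(Mod(\{\alpha\wedge\beta\}\cup IC),\leq_{KB})=Min(Mod(\{\alpha\}\cup IC),\leq_{KB})\cap Mod(\beta)$ proved by two inclusions using faithfulness, minimality, connectivity and transitivity. If anything, you are more explicit than the paper about where $(\leq 1)$ and $(\leq 2)$ are actually invoked in the $(\dotplus 4)$ and $(\dotplus 7)$ arguments, which the paper's version leaves implicit.
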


But, that is not all. Any rational revision of $KB$ by $\alpha$,
that satisfies all the rationality postulates, can be constructed by
our construction method, and this is formalized below.

\begin{lemma} Let $KB$ be a Horn knowledge base and $\dotplus$ a revision
operator that satisfies all the rationality postulates for revision
$(\dotplus 1)$ to $(\dotplus 7)$. Then, there exists an order
$\leq_{KB}$ among $\mathcal{S}$, that satisfies $(\leq 1)$ to $(\leq
5)$, and for any sentence $\alpha$, $Mod(KB\dotplus \alpha)$ is
given in $Min(Mod(\{\alpha\}\cup IC),\leq_{KB})$.\it

\begin{proof} Let us construct an order $\leq_{KB}$ among
interpretations in $\mathcal{S}$ as follows: For any two abductive
interpretations $I$ and $I'$ in $\mathcal{S}$, define $I\leq_{KB}I'$
iff either $I\in Mod(KB)$ or $I \in Mod(KB\dotplus form(I,I'))$,
where $form(I,I')$ stands for sentence whose only models are $I$ and
$I'$. We will show that $\leq_{KB}$ thus constructed satisfies
$(\leq 1)$ to $(\leq 5)$ and $Min(Mod(\{\alpha\}\cup
IC),\leq_{KB})=Mod(KB\dotplus \alpha)$.

First, we show that $Min(Mod(\{\alpha\}\cup
IC),\leq_{KB})=Mod(KB\dotplus \alpha)$.Suppose $\alpha$ is not
satisfiable, i.e. $Mod(\alpha)$ is empty, or $\alpha$ does not
satisfy $IC$, then there are no abductive models of $\{\alpha\}\cup
IC$, and hence $Min(Mod(\{\alpha\}\cup IC),\leq_{KB})$ is empty.
From $(\dotplus 3)$, we infer that $Mod(KB\dotplus \alpha)$ is also
empty. When $\alpha$ is satisfiable and $\alpha$ satisfies $IC$, the
required result is obtained in two parts:

\begin{enumerate}
\item[1)] If $I\in Min(Mod(\{\alpha\}\cup IC),\leq_{KB})$, then $I\in Mod(KB\dotplus
\alpha)$\\
Since $\alpha$ is satisfiable and consistent with $IC$, $(\dotplus
3)$ implies that there exists at least one model, say $I'$, for
$KB\dotplus \alpha$. From $(\dotplus 1)$, it is clear that $I'$ is a
model of $IC$, from $(\dotplus 2)$ we also get that $I'$ is a model
of $\alpha$, and consequently $I\leq_{KB}I'$ (because $I\in
Min(Mod(\{\alpha\}\cup IC),\leq_{KB})$). Suppose $I\in Mod(KB)$,
then $(\dotplus 4)$ immediately gives $I \in Mod(KB\dotplus
\alpha)$. If not, from our definition of $\leq_{KB}$, it is clear
that $I \in Mod(KB\dotplus form(I,I'))$. Note that $\alpha \land
form(I,I')\equiv form(I,I')$, since both $I$ and $I'$ are models of
$\alpha$. From $(\dotplus 6)$ and $(\dotplus 7)$, we get
$Mod(KB\dotplus \alpha)\cap\{I,I'\}=Mod(KB\dotplus form(I,I'))$.
Since $I \in Mod(KB\dotplus form(I,I'))$, it immediately follows
that $I\in Mod(KB\dotplus \alpha)$.
\item[2)] If $I\in Mod(KB\dotplus \alpha)$, then $I \in Min(Mod(\{\alpha\}\cup
IC),\leq_{KB})$.\\
From $(\dotplus 1)$ we get $I$ is a model of $IC$, and from
$(\dotplus 2)$, we obtain $I\in Mod(\alpha)$. Suppose $I\in
Mod(KB)$, then from our definition of $\leq_{KB}$, we get
$I\leq_{KB}I'$, for any other model $I'$ of $\alpha$ and $IC$, and
hence $I \in Min(Mod(\{\alpha\}\cup IC),\leq_{KB})$. Instead, if $I$
is not a model of $KB$, then, to get the required result, we should
show that $I \in Mod(KB\dotplus form(I,I'))$, for every model $I'$
of $\alpha$  and $IC$. As we have observed previously, from
$(\dotplus 6)$ and $(\dotplus 7)$, we get $Mod(KB\dotplus
\alpha)\cap \{I,I'\}=Mod(KB\dotplus form(I,I'))$. Since $I\in
Mod(KB\dotplus \alpha)$, it immediately follows that $I \in
Mod(KB\dotplus form(I,I'))$. Hence $I\leq_{KB} I'$ for any model
$I'$ of $\alpha$ and $IC$, and consequently, $I\in
Min(Mod(\{\alpha\}\cup IC),\leq_{KB})$.
\end{enumerate}

Now we proceed to show that the order $\leq_{KB}$ among
$\mathcal{S}$, constructed as per our definition, satisfies all the
rationality axioms $(\leq 1)$ to $(\leq 5)$.

\begin{enumerate}
\item[$(\leq 1)$] $\leq_{KB}$ is a pre-order.\\
Note that we need to consider only abductive interpretations from
$\mathcal{S}$. From $(\dotplus 2)$ and $(\dotplus 3)$, we have
$Mod(KB\dotplus form(I,I'))=\{I\}$, and so $I \leq_{KB}I$. Thus
$\leq_{KB}$ satisfies reflexivity. let $I_{1}\in Mod(IC)$ and
$I_{2}\notin Mod(IC)$. Clearly, it is possible that two
interpretations $I_{1}$ and $I_{2}$ are not models of $KB$, and
$Mod(KB\dotplus form(I_{1},I_{2})) =\{I_{1}\}$. So, $I_{1}\leq_{KB}
I_{2}$ does not necessarily imply $I_{2}\leq_{KB}I_{1}$, and thus
$\leq_{KB}$ satisfies anti-symmetry.

To show the transitivity, we have to prove that $I_{1}\leq_{KB}
I_{3}$, when $I_{1}\leq_{KB} I_{2}$ and $I_{2}\leq_{KB} I_{3}$ hold.
Suppose $I_{1}\in Mod(KB)$, then  $I_{1}\leq_{KB} I_{3}$ follows
immediately from our definition of $\leq_{KB}$. On the other case,
when $I_{1}\notin Mod(KB)$, we first observe that $I_{1}\in
Mod(KB\dotplus form(I_{1},I_{2}))$, which follows from definition of
$\leq_{KB}$ and $I_{1}\leq_{KB} I_{2}$. Also observe that
$I_{2}\notin Mod(KB)$. If $I_2$ were a model of $KB$, then it
follows from $(\dotplus 4)$ that $Mod(KB\dotplus
form(I_1,I_2))=Mod(KB)\cap \{I_1,I_2\}=\{I_2\}$, which is a
contradiction, and so $I_2\notin Mod(KB)$. This, together with
$I_{2}\leq_{KB}I_{3}$, implies that $I_{2}\in Mod(KB\dotplus
form(I_{2},I_{3}))$. Now consider $Mod(KB+form(I_{1},I_{2},I_{3}))$.
Since $\dotplus $ satisfies $(\dotplus 2)$ and $(\dotplus 3)$, it
follows that this is a non-empty subset of $\{I_{1},I_{2},I_{3}\}$.
We claim that $Mod(KB\dotplus form(I_{1},I_{2},I_{3}))\cap
\{I_{1},I_{2}\}$ can not be empty. If it is empty, then it means
that $Mod(KB\dotplus form(I_{1},I_{2},I_{3}))=\{I_{3}\}$. Since
$\dotplus $ satisfies $(\dotplus 6)$ and $(\dotplus 7)$, this
further implies that $Mod(KB\dotplus
form(I_{2},I_{3}))=Mod(KB\dotplus
form(I_{1},I_{2},I_{3}))\cap\{I_{2},I_{3}\}=\{I_{3}\}$. This
contradicts our observation that $I_{2}\in Mod(KB\dotplus
form(I_{2},I_{3}))$, and so $Mod(KB\dotplus
form(I_{1},I_{2},I_{3}))\cap \{I_{1},I_{2}\}$ can not be empty.
Using $(\dotplus 6)$ and $(\dotplus 7)$ again, we get
$Mod(KB\dotplus form(I_{1},I_{2}))=Mod(KB\dotplus
form(I_{1},I_{2},I_{3}))\cap\{I_{1},I_{2}\}$. Since we know that
$I_{1} \in Mod(KB\dotplus form(I_{1},I_{2}))$, it follows that
$I_{1}\in Mod(KB\dotplus form(I_{1},I_{2},I_{3}))$. From $(\dotplus
6)$ and $(\dotplus 7)$ we also get $Mod(KB\dotplus
form(I_{1},I_{3}))=Mod(KB+form(I_{1},I_{2},I_{3}))\cap
\{I_{1},I_{3}\}$, which clearly implies that $I_{1} \in
Mod(KB\dotplus form(I_{1},I_{3}))$. From our definition of
$\leq_{KB}$, we now obtain $I_{1}\leq_{KB}I_{3}$. Thus,  $\leq_{KB}$
is a pre-order.
\item[$(\leq 2)$] $\leq_{KB}$ is total.\\
Since $\dotplus $ satisfies $(\dotplus 2)$ and $(\dotplus 3)$, for
any two abductive interpretations $I$ and $I'$ in $\mathcal{S}$, it
follows that $Mod(KB\dotplus form(I,I'))$ is a non-empty subset of
$\{I,I'\}$. Hence, $\leq_{KB}$ is total.
\item[$(\leq 3)$] $\leq_{KB}$ is faithful to $KB$.\\
From our definition of $\leq_{KB}$, it follows that $\forall
I_{1},I_{2} \in Mod(KB):I_{1}<_{KB}I_{2}$ does not hold. Suppose
$I_{1}\in Mod(KB)$ and $I_{2}\notin Mod(KB)$. Then, we have
$I_{1}\leq_{KB}I_{2}$. Since $\dotplus $ satisfies $(\dotplus 4)$,
we also have $Mod(KB\dotplus form(I_{1},I_{2}))=\{I_{1}\}$. Thus,
from our definition of $\leq_{KB}$, we can not have
$I_{2}\leq_{KB}I_{1}$. So, if $I_{1}\in Mod(KB)$ and $I_{2}\notin
Mod(KB)$, then $I_{1}<_{KB}I_{2}$ holds. Thus, $\leq_{KB}$ is
faithful to $KB$.
\item[$(\leq 4)$] For any non-empty subset $\mathcal{F}$ of $\mathcal{S}$, $Min(\mathcal{F},\leq_{KB})$
is not empty.\\ Let $\alpha$ be a sentence such that
$Mod(\{\alpha\}\cup IC)=\mathcal{F}$. We have already shown that
$Mod(KB\dotplus \alpha)=Min(\mathcal{F},\leq_{KB})$. Since,
$\dotplus $ satisfies $(\dotplus 3)$, it follows that
$Mod(KB\dotplus \alpha)$ is not empty, and thus
$Min(\mathcal{F},\leq_{KB})$ is not empty.
\item[$(\leq 5)$] If $KB\equiv KB'$, then $\leq_{KB}=\leq_{KB'}$.\\
This follows immediately from the fact that $\dotplus $ satisfies
$(\dotplus 5)$.
\end{enumerate}
\end{proof}
\end{lemma}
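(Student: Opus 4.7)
The plan is to extract a canonical order $\leq_{KB}$ directly from the revision operator $\dotplus$ and then verify both that it yields the desired minimum-models description of revision and that it satisfies the five order postulates. Concretely, for any two interpretations $I, I' \in \mathcal{S}$, I set $I \leq_{KB} I'$ iff either $I \in Mod(KB)$ or $I \in Mod(KB \dotplus form(I, I'))$, where $form(I, I')$ is a sentence whose only abductive models are $I$ and $I'$. The idea is that the two-model revisions of $KB$ act as an oracle for pairwise comparisons, with models of $KB$ itself automatically made minimal.

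For the correspondence $Mod(KB\dotplus\alpha) = Min(Mod(\{\alpha\}\cup IC), \leq_{KB})$, I would first dispose of the degenerate case where $\{\alpha\}\cup IC$ is unsatisfiable: both sides are empty, the left by $(\dotplus 3)$ and the right by vacuity. In the consistent case I prove the two inclusions separately. For $\subseteq$, given a minimal $I$ and any $I' \in Mod(KB\dotplus\alpha)$ (nonempty by $(\dotplus 3)$; a model of $\{\alpha\}\cup IC$ by $(\dotplus 1)$--$(\dotplus 2)$), I use $I \leq_{KB} I'$ and split on whether $I \in Mod(KB)$: if yes, $(\dotplus 4)$ closes the case; if no, the definition of $\leq_{KB}$ gives $I \in Mod(KB\dotplus form(I,I'))$, and since $\alpha \wedge form(I,I') \equiv form(I,I')$, the extended vacuity postulates $(\dotplus 6)$--$(\dotplus 7)$ yield $Mod(KB\dotplus\alpha)\cap\{I,I'\} = Mod(KB\dotplus form(I,I'))$, forcing $I \in Mod(KB\dotplus\alpha)$. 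The $\supseteq$ direction is symmetric, again splitting on membership in $Mod(KB)$ and using $(\dotplus 6)$--$(\dotplus 7)$ to move between $KB\dotplus\alpha$ and $KB\dotplus form(I,I')$ for an arbitrary competitor $I'$.

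Four of the five order postulates are then short. Reflexivity and totality in $\mathcal{S}$ follow from $(\dotplus 2)$ and $(\dotplus 3)$ applied to $form(I,I)$ and $form(I,I')$, which force $Mod(KB\dotplus form(I,I'))$ to be a non-empty subset of $\{I,I'\}$. Faithfulness: for $I_1, I_2 \in Mod(KB)$ the definition prevents strict inequality; for $I_1 \in Mod(KB)$ and $I_2 \notin Mod(KB)$, $(\dotplus 4)$ gives $Mod(KB\dotplus form(I_1,I_2)) = \{I_1\}$, blocking $I_2 \leq_{KB} I_1$ and yielding $I_1 <_{KB} I_2$. Minimality reduces to the already-proved correspondence applied to any sentence defining $\mathcal{F}$, combined with $(\dotplus 3)$. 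Preservance is immediate from $(\dotplus 5)$.

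The main obstacle is transitivity. From $I_1 \leq_{KB} I_2$ and $I_2 \leq_{KB} I_3$ with $I_1 \notin Mod(KB)$ (the easy subcase aside), I cannot directly combine the defining witnesses $I_1 \in Mod(KB\dotplus form(I_1,I_2))$ and $I_2 \in Mod(KB\dotplus form(I_2,I_3))$. The key move is to pass to the three-model sentence $form(I_1,I_2,I_3)$: by $(\dotplus 2)$--$(\dotplus 3)$, $Mod(KB\dotplus form(I_1,I_2,I_3))$ is a non-empty subset of $\{I_1,I_2,I_3\}$, and $(\dotplus 6)$--$(\dotplus 7)$ let me intersect with any two-element subset to recover the corresponding two-model revision. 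I would first rule out $I_2 \in Mod(KB)$ (else $(\dotplus 4)$ applied to $form(I_1,I_2)$ contradicts $I_1 \in Mod(KB\dotplus form(I_1,I_2))$), then rule out $Mod(KB\dotplus form(I_1,I_2,I_3)) = \{I_3\}$ (else intersecting with $\{I_2,I_3\}$ contradicts $I_2 \in Mod(KB\dotplus form(I_2,I_3))$). Intersecting with $\{I_1,I_2\}$ therefore places $I_1$ in $Mod(KB\dotplus form(I_1,I_2,I_3))$, and a final intersection with $\{I_1,I_3\}$ delivers $I_1 \in Mod(KB\dotplus form(I_1,I_3))$, i.e.\ $I_1 \leq_{KB} I_3$.
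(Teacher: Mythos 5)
Your proposal is correct and follows essentially the same route as the paper's own proof: the identical definition of $\leq_{KB}$ via $form(I,I')$, the same case split on membership in $Mod(KB)$ for the two inclusions of the correspondence, and the same three-model trick with $form(I_1,I_2,I_3)$ and postulates $(\dotplus 6)$--$(\dotplus 7)$ to establish transitivity. No substantive differences to report.
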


Thus, the order among interpretations $\leq_{KB}$, constructed as
per our definition, satisfies $(\leq 1)$ to $(\leq 5)$, and
$Mod(KB\dotplus \alpha)=Min(Mod(\{\alpha\}\cup IC),\leq_{KB}).$
$\blacksquare$

So, we have a one to one correspondence between the axiomatization
and the construction, which is highly desirable, and this is
summarized by the following \it representation theorem. \rm

\begin{theorem} Let $KB$ be revised by $\alpha$, and $KB\dotplus
\alpha$ be obtained by the construction discussed above. Then,
$\dotplus$ is a revision operator iff it satisfies all the
rationality postulates $(\dotplus 1)$ to $(\dotplus 7)$.
\end{theorem}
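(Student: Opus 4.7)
The plan is to derive this representation theorem as an immediate corollary of the two preceding lemmas, which together already establish both halves of the required equivalence. First I would fix the Horn knowledge base $KB$ and unpack the phrase ``$\dotplus$ is a revision operator'' as meaning that $\dotplus$ arises from some order $\leq_{KB}$ on $\mathcal{S}$ satisfying $(\leq 1)$ to $(\leq 5)$ via the rule $Mod(KB\dotplus\alpha) = Min(Mod(\{\alpha\}\cup IC), \leq_{KB})$; without this reading, the biconditional is not even well-typed, and pinning it down is the one conceptual step that the statement itself leaves implicit.

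For the ``only if'' direction, I would simply invoke Lemma 1: given any order $\leq_{KB}$ satisfying $(\leq 1)$--$(\leq 5)$, that lemma verifies each of the rationality postulates $(\dotplus 1)$--$(\dotplus 7)$ in turn for the operator produced by the construction, so nothing further is needed. In particular, $(\dotplus 1)$ is immediate from the fact that the construction leaves $P$, $Ab$, $IC$ untouched and only selects within $Mod(IC)$; $(\dotplus 2)$ and $(\dotplus 3)$ come from the selection being a subset of $Mod(\{\alpha\}\cup IC)$ together with $(\leq 4)$; $(\dotplus 4)$ and $(\dotplus 5)$ come from faithfulness $(\leq 3)$ and preservance $(\leq 5)$; and $(\dotplus 6)$--$(\dotplus 7)$ come from the standard ``minimal in a set is minimal in any subset containing it'' argument.

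For the ``if'' direction, I would appeal to Lemma 2: starting from an operator $\dotplus$ that satisfies $(\dotplus 1)$--$(\dotplus 7)$, Lemma 2 explicitly constructs the order $I \leq_{KB} I'$ iff $I \in Mod(KB)$ or $I \in Mod(KB \dotplus form(I, I'))$, verifies that this order satisfies $(\leq 1)$--$(\leq 5)$, and then proves $Mod(KB\dotplus\alpha) = Min(Mod(\{\alpha\}\cup IC), \leq_{KB})$, which is exactly the statement that $\dotplus$ coincides with the operator obtained by the construction from this order. Hence $\dotplus$ is realized by the construction, i.e.\ is a revision operator in the sense above.

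The main point I expect to require care is not any new technical calculation but rather the wording of the ``one to one correspondence'' claim that accompanies the theorem. I would therefore close by observing that Lemmas 1 and 2 taken together yield a bijection, up to extensional equality of revision operators on $KB$, between orders on $\mathcal{S}$ satisfying $(\leq 1)$--$(\leq 5)$ and revision operators satisfying $(\dotplus 1)$--$(\dotplus 7)$: Lemma 1 sends each order to an operator meeting the postulates, Lemma 2 sends each postulate-respecting operator back to an order, and the two passes invert one another because both recover $Mod(KB\dotplus\alpha)$ as $Min(Mod(\{\alpha\}\cup IC), \leq_{KB})$. This justifies the biconditional in the theorem with no additional argument.
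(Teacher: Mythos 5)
Your proposal matches the paper's own proof, which simply cites Lemma 1 for the soundness direction and Lemma 2 for the completeness direction; your additional remarks on how to read ``$\dotplus$ is a revision operator'' and on the correspondence between orders and operators are a reasonable elaboration of the same argument rather than a different route. No gaps.
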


\begin{proof} Follows from Lemma 1. and Lemma 2. $\blacksquare$
\end{proof}

\subsection{Contraction}
Contraction of a sentence from a Horn knowledge base $KB$ is studied
in the same way as that of revision. We first discuss the
rationality of change during contraction and proceed to provide a
construction for contraction using duality between revision and
contraction.

\begin{center} \textbf{Rationality Postulates} \end{center}

Let $KB=<P,Ab,IC,K>$ be contracted by a sentence $\alpha$ to result
in a new Horn knowledge base $KB\dot{-}\alpha=<P',Ab',IC',K'>$.

\begin{enumerate}
\item[$(\dot{-}1)$] (\it Inferential Constancy)\rm $P'=P$ and $Ab'=Ab$ and $IC'=IC$.
\item[$(\dot{-}2)$] (\it Success)\rm If $\alpha \notin
Cn_{P}(KB)$, then $\alpha$ is not accepted in $KB\dot{-}\alpha$,
i.e. if $\alpha$ is not true in all the abductive interpretations,
then $\alpha$ is not true in all abductive interpretations in
$Mod(KB\dot{-}\alpha)$.
\item[$(\dot{-}3)$](\it Inclusion)\rm $\forall$ (belief)
$\beta$:if $\beta$ is accepted in $KB\dot{-}\alpha$, then $\beta$ is
accepted in $KB$, i.e. $Mod(KB)\subseteq Mod(KB\dot{-}\alpha)$.
\item[$(\dot{-}4)$](\it Vacuity)\rm If $\alpha$ is not
accepted in $KB$, then $KB\dot{-}\alpha=KB$, i.e. if $\alpha$ is not
true in all the abductive models of $KB$, then
$Mod(KB\dot{-}\alpha)=Mod(KB)$.
\item[$(\dot{-}5)$](\it Recovery)\rm $(KB\dot{-}\alpha)+\alpha$
implies $KB$, i.e. $Mod(KB\dot{-}\alpha)\cap Mod(\alpha)\subseteq
Mod(KB)$.
\item[$(\dot{-}6)$](\it Preservation)\rm If $KB\equiv KB'$ and
$\alpha\equiv \beta$, then $KB\dot{-}\alpha=KB'\dot{-}\beta$, i.e.
if $Mod(KB)=Mod(KB')$ and $Mod(\alpha)=Mod(\beta)$, then
$Mod(KB\dot{-}\alpha)=Mod(KB'\dot{-}\beta)$.
\item[$(\dot{-}7)$] (\it Conjunction 1) \rm$KB\dot{-}(\alpha\land \beta)$
implies $KB\dot{-}\alpha\cap KB\dot{-}\beta$, i.e.
$Mod(KB\dot{-}(\alpha\land \beta))\subseteq Mod(KB\dot{-}\alpha)\cup
Mod(KB\dot{-}\beta)$.
\item[$(\dot{-}8)$] (\it Conjunction 2)\rm  If $\alpha$ is not accepted in
$KB\dot{-}(\alpha\land \beta)$, then $KB\dot{-}\alpha$ implies
$KB\dot{-}(\alpha\land \beta)$, i.e. if $\alpha$ is not true in all
the models of $KB\dot{-}(\alpha\land \beta)$, then
$Mod(KB\dot{-}\alpha)\subseteq Mod(KB\dot{-}(\alpha\land \beta))$.
\end{enumerate}

Before providing a construction for contraction, we wish to study
the duality between revision and contraction. The Levi and Harper
identities still holds in our case, and is discussed in the sequel.

\begin{center} \textbf{Relationship between contraction and
revision} \end{center}

Contraction and revision are related to each other. Given a
contraction function $\dot{-}$, a revision function $\dotplus$ can
be obtained as follows:
$$\text{(\it Levi
Identity)}~~~~~Mod(KB\dotplus \alpha)=Mod(KB\dot{-}\neg\alpha)\cap
Mod(\alpha)$$ The following theorem formally states that Levi
identity holds in our approach.

\begin{theorem} Let $\dot{-}$ be a contraction operator that
satisfies all the rationality postulates $(\dot{-}1)$ to
$(\dot{-}8)$. Then, the revision function $\dotplus$, obtained from
$\dot{-}$ using the Levi Identity, satisfies all the rationality
postulates $(\dotplus 1)$ to $(\dotplus 7)$. $\blacksquare$.
\end{theorem}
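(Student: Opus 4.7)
The plan is to verify each revision postulate $(\dotplus 1)$ through $(\dotplus 7)$ individually by unfolding the Levi Identity $Mod(KB\dotplus\alpha)=Mod(KB\dot{-}\neg\alpha)\cap Mod(\alpha)$ and invoking the matching contraction postulate(s). I would organize the work by difficulty. The ``direct'' cases come first: $(\dotplus 1)$ follows from $(\dot{-}1)$ together with the fact that intersecting with $Mod(\alpha)$ does not touch $P$, $Ab$, or $IC$; $(\dotplus 2)$ is immediate since $Mod(KB\dotplus\alpha)\subseteq Mod(\alpha)$ by construction; and $(\dotplus 5)$ follows from $(\dot{-}6)$ applied to $KB\equiv KB'$ and $\neg\alpha\equiv\neg\beta$, combined with $Mod(\alpha)=Mod(\beta)$. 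For $(\dotplus 4)$, if $\neg\alpha$ is not accepted in $KB$ then $(\dot{-}4)$ gives $KB\dot{-}\neg\alpha=KB$, so the Levi intersection becomes exactly $Mod(KB)\cap Mod(\alpha)=Mod(KB+\alpha)$.

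For $(\dotplus 3)$ I would split by direction. The ``only if'' is a direct consequence of $(\dot{-}1)$, since $Mod(KB\dotplus\alpha)\subseteq Mod(\alpha)\cap Mod(IC)$; if this intersection is non-empty then $\alpha$ is both satisfiable and $P$-consistent with $IC$. For the ``if'' direction, when $\alpha$ is satisfiable and $P$-consistent with $IC$, $\neg\alpha$ cannot be true in every abductive interpretation, so by the success postulate $(\dot{-}2)$ applied to $KB\dot{-}\neg\alpha$ there exists a model of the contraction that falsifies $\neg\alpha$, and hence satisfies $\alpha$; by $(\dot{-}1)$ this model still satisfies $IC$, so the Levi intersection is non-empty.

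The main obstacle, as in the classical AGM development, is $(\dotplus 6)$ and $(\dotplus 7)$. Both reduce, after cancelling the common factor $Mod(\alpha\land\beta)$ from each side, to relating $Mod(KB\dot{-}\neg\alpha)$ with $Mod(KB\dot{-}(\neg\alpha\lor\neg\beta))$, which is precisely where $(\dot{-}7)$ and $(\dot{-}8)$ come into play. For $(\dotplus 6)$, I would case split on whether $\neg\alpha$ is accepted in $KB\dot{-}(\neg\alpha\lor\neg\beta)$: if not, $(\dot{-}8)$ (applied with $\alpha\rightsquigarrow\neg\alpha$, $\beta\rightsquigarrow\neg\beta$) gives $Mod(KB\dot{-}\neg\alpha)\subseteq Mod(KB\dot{-}(\neg\alpha\lor\neg\beta))$, and the inclusion follows; if $\neg\alpha$ is accepted there, then combined with $(\dot{-}2)$ applied to $\neg\alpha\lor\neg\beta$ (assuming $\alpha\land\beta$ is satisfiable, otherwise both sides are empty and the postulate is trivial), I would derive a contradiction, so this branch cannot occur.

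For $(\dotplus 7)$, the hypothesis that $\neg\beta$ is not accepted in $KB\dotplus\alpha$ means there is already a model of $KB\dot{-}\neg\alpha$ lying in $Mod(\alpha\land\beta)$. Unfolding, I need $Mod(KB\dot{-}(\neg\alpha\lor\neg\beta))\cap Mod(\alpha\land\beta)\subseteq Mod(KB\dot{-}\neg\alpha)$. The plan is to apply $(\dot{-}7)$ to dominate the left-hand side by $Mod(KB\dot{-}\neg\alpha)\cup Mod(KB\dot{-}\neg\beta)$, and then use a symmetric $(\dot{-}8)$ argument to absorb the second branch into the first: the ``not accepted'' hypothesis rules out $\neg\alpha$ being accepted in $KB\dot{-}(\neg\alpha\lor\neg\beta)$, which propagates to a comparable statement for $\neg\beta$, and hence $(\dot{-}8)$ applies in both directions and forces the containment. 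The delicate step here is the careful book-keeping about which of $\neg\alpha$, $\neg\beta$, and $\neg\alpha\lor\neg\beta$ is accepted in which contraction, and I expect that to be the one place the proof needs to be written out with full precision rather than cited as routine.
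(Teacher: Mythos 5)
The paper itself gives no proof of this theorem (the statement ends in a $\blacksquare$ with the argument left implicit), so your attempt has to stand on its own. Your treatment of $(\dotplus 1)$--$(\dotplus 5)$ is fine and is the routine unfolding one would expect. The problem is in the two supplementary postulates, and it is a genuine error, not a book-keeping issue: after unfolding the Levi identity, both $(\dotplus 6)$ and $(\dotplus 7)$ reduce to comparing $Mod(KB\dot{-}\neg\alpha)$ with $Mod(KB\dot{-}\neg(\alpha\land\beta))=Mod(KB\dot{-}(\neg\alpha\lor\neg\beta))$ inside $Mod(\alpha\land\beta)$, but $(\dot{-}7)$ and $(\dot{-}8)$ only speak about contractions by a \emph{conjunction} $\gamma\land\delta$ relative to contractions by its conjuncts. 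Substituting $\alpha\rightsquigarrow\neg\alpha$, $\beta\rightsquigarrow\neg\beta$ into $(\dot{-}8)$, as you propose, yields a statement about $KB\dot{-}(\neg\alpha\land\neg\beta)$, i.e.\ about contracting $\neg(\alpha\lor\beta)$ --- not about $KB\dot{-}(\neg\alpha\lor\neg\beta)$, which is what the Levi identity actually produces. There is no $\delta$ with $\neg\alpha\land\delta\equiv\neg\alpha\lor\neg\beta$, so the inclusion $Mod(KB\dot{-}\neg\alpha)\subseteq Mod(KB\dot{-}(\neg\alpha\lor\neg\beta))$ you want from $(\dot{-}8)$ simply cannot be obtained that way; the same De Morgan slip undermines your use of $(\dot{-}7)$ in $(\dotplus 7)$.

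The missing idea is the standard AGM decomposition of $\neg\alpha$ itself as a conjunction one of whose conjuncts is $\neg(\alpha\land\beta)$, namely $\neg\alpha\equiv(\neg\alpha\lor\neg\beta)\land(\neg\alpha\lor\beta)$. With $\gamma=\neg\alpha\lor\neg\beta$ and $\delta=\neg\alpha\lor\beta$, postulate $(\dot{-}7)$ gives $Mod(KB\dot{-}\neg\alpha)\subseteq Mod(KB\dot{-}\gamma)\cup Mod(KB\dot{-}\delta)$; intersecting with $Mod(\alpha\land\beta)$ and using recovery $(\dot{-}5)$ plus inclusion $(\dot{-}3)$ to absorb the $\delta$-branch (since $Mod(\alpha\land\beta)\subseteq Mod(\delta)$, recovery forces that branch back into $Mod(KB)$) yields exactly $(\dotplus 6)$. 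Dually, $(\dot{-}8)$ applied to the same decomposition says: if $\neg\alpha\lor\neg\beta$ is not accepted in $KB\dot{-}\neg\alpha$ --- which is precisely what the hypothesis of $(\dotplus 7)$ delivers, since a model of $KB\dotplus\alpha$ satisfying $\beta$ satisfies $\alpha\land\beta$ --- then $Mod(KB\dot{-}(\neg\alpha\lor\neg\beta))\subseteq Mod(KB\dot{-}\neg\alpha)$, which gives $(\dotplus 7)$. Note also that $(\dot{-}7)$ is the postulate doing the work for $(\dotplus 6)$ and $(\dot{-}8)$ for $(\dotplus 7)$, the reverse of the pairing your sketch suggests, and that recovery is genuinely needed, whereas your plan never invokes $(\dot{-}5)$ at all.
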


Similarly, a contraction function $\dot{-}$ can be constructed using
the given revision function $\dotplus $ as follows:
$$\text{(\it Harper
Identity)}~~~~~Mod(KB\dot{-}\alpha)=Mod(KB)\cup Mod(KB\dotplus \neg
\alpha)$$

\begin{theorem} Let $\dotplus $ be a revision operator that
satisfies all the rationality postulates $(\dotplus 1)$ to
$(\dotplus 7)$. Then, the contraction function $\dot{-}$, obtained
from $\dotplus $ using the Harper Identity, satisfies all the
rationality postulates $(\dot{-}1)$ to $(\dot{-}8)$. $\blacksquare$
\end{theorem}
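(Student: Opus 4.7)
The plan is to define $\dot{-}$ via the Harper Identity $Mod(KB\dot{-}\alpha)=Mod(KB)\cup Mod(KB\dotplus\neg\alpha)$ and then verify each contraction postulate $(\dot{-}1)$--$(\dot{-}8)$ by unpacking this definition and invoking the corresponding revision postulate for $\dotplus$. Since the statement is dual to Theorem 2, I expect most clauses to reduce to routine set-theoretic manipulations, with only $(\dot{-}8)$ requiring genuine combinatorial care.

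First I would dispose of the easy clauses. Postulate $(\dot{-}1)$ follows immediately from $(\dotplus 1)$ since both $Mod(KB)$ and $Mod(KB\dotplus\neg\alpha)$ consist of models of the same $IC$. Postulate $(\dot{-}3)$ is by construction: $Mod(KB)\subseteq Mod(KB)\cup Mod(KB\dotplus\neg\alpha)$. For $(\dot{-}5)$, note that every member of $Mod(KB\dotplus\neg\alpha)$ satisfies $\neg\alpha$ by $(\dotplus 2)$, so intersecting Harper's union with $Mod(\alpha)$ collapses it to $Mod(KB)\cap Mod(\alpha)\subseteq Mod(KB)$. Postulate $(\dot{-}6)$ is inherited directly from $(\dotplus 5)$ applied to $\neg\alpha\equiv\neg\beta$. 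For $(\dot{-}2)$, if $\alpha\notin Cn_P(KB)$ then some model of $K\cup IC$ falsifies $\alpha$, so $\neg\alpha$ is $P$-consistent with $IC$; by $(\dotplus 3)$, $Mod(KB\dotplus\neg\alpha)$ is non-empty and its members satisfy $\neg\alpha$, witnessing that $\alpha$ is not accepted in $KB\dot{-}\alpha$. For $(\dot{-}4)$, if $\alpha$ is not accepted in $KB$ then $Mod(KB)\cap Mod(\neg\alpha)$ is non-empty, and $(\dotplus 4)$ gives $Mod(KB\dotplus\neg\alpha)=Mod(KB)\cap Mod(\neg\alpha)\subseteq Mod(KB)$, so Harper's union equals $Mod(KB)$.

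For $(\dot{-}7)$ I would expand the Harper identity on both sides: the required inclusion reduces to showing $Mod(KB\dotplus(\neg\alpha\lor\neg\beta))\subseteq Mod(KB\dotplus\neg\alpha)\cup Mod(KB\dotplus\neg\beta)$. Pick any $M$ in the left side; then $M\models\neg\alpha$ or $M\models\neg\beta$, say the former. Using $(\dotplus 6)$ with the pair $(\neg\alpha\lor\neg\beta,\neg\alpha)$ together with the equivalence $(\neg\alpha\lor\neg\beta)\land\neg\alpha\equiv\neg\alpha$ and $(\dotplus 5)$, we obtain $Mod(KB\dotplus(\neg\alpha\lor\neg\beta))\cap Mod(\neg\alpha)\subseteq Mod(KB\dotplus\neg\alpha)$, placing $M$ in the right side as required.

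The main obstacle, and where I expect to spend the most care, is $(\dot{-}8)$. After unfolding Harper, the goal becomes $Mod(KB\dotplus\neg\alpha)\subseteq Mod(KB)\cup Mod(KB\dotplus(\neg\alpha\lor\neg\beta))$, under the hypothesis that $\alpha$ is not accepted in $Mod(KB)\cup Mod(KB\dotplus(\neg\alpha\lor\neg\beta))$. I split on where the $\neg\alpha$-witness lies. In Case A, some model of $KB\dotplus(\neg\alpha\lor\neg\beta)$ falsifies $\alpha$; then applying $(\dotplus 7)$ with $\alpha':=\neg\alpha\lor\neg\beta$ and $\beta':=\neg\alpha$ (so that $\neg\beta'=\alpha$ is not accepted in $KB\dotplus\alpha'$) yields $Mod(KB\dotplus((\neg\alpha\lor\neg\beta)\land\neg\alpha))\subseteq Mod(KB\dotplus(\neg\alpha\lor\neg\beta))\cap Mod(\neg\alpha)$, which by the equivalence above and $(\dotplus 5)$ is exactly what we need. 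In Case B, every model of $KB\dotplus(\neg\alpha\lor\neg\beta)$ satisfies $\alpha$; then the hypothesis forces the witness into $Mod(KB)$, so $\alpha$ is not accepted in $KB$, and the already-established $(\dot{-}4)$ yields $Mod(KB\dot{-}\alpha)=Mod(KB)$, after which Harper gives the inclusion for free. Finally, combining this with the representation of $\dotplus$ via Lemma 1 will give Theorem 4.
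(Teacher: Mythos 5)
Your proof is correct, and it is worth noting that the paper itself offers no argument for this theorem at all --- the statement is simply asserted and closed with $\blacksquare$ --- so there is no proof of record to compare against. Your verification is the standard one and it holds up: $(\dot{-}1)$, $(\dot{-}3)$, $(\dot{-}5)$, $(\dot{-}6)$ are immediate from the Harper union together with $(\dotplus 1)$, $(\dotplus 2)$ and $(\dotplus 5)$; $(\dot{-}4)$ correctly uses $(\dotplus 4)$ applied to $\neg\alpha$ to collapse the union to $Mod(KB)$; $(\dot{-}7)$ correctly reduces to $Mod(KB\dotplus(\neg\alpha\lor\neg\beta))\subseteq Mod(KB\dotplus\neg\alpha)\cup Mod(KB\dotplus\neg\beta)$ and discharges it via $(\dotplus 6)$, $(\dotplus 5)$ and the equivalence $(\neg\alpha\lor\neg\beta)\land\neg\alpha\equiv\neg\alpha$; and your two-case analysis for $(\dot{-}8)$ (witness in $Mod(KB\dotplus(\neg\alpha\lor\neg\beta))$ versus witness in $Mod(KB)$) is exactly the right split, with Case A handled by $(\dotplus 7)$ and Case B falling back on the already-proved $(\dot{-}4)$ and $(\dot{-}3)$. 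Two small remarks. First, for $(\dot{-}2)$ you take the antecedent literally as $\alpha\notin Cn_P(KB)$, under which your argument (and an even shorter one: the falsifying model of $KB$ already lies in the $Mod(KB)$ disjunct of the Harper union) works; under the paper's parenthetical gloss ``$\alpha$ is not true in all abductive interpretations'' the postulate would actually fail when $\neg\alpha$ is unsatisfiable together with $IC$, but that is a defect of the paper's formulation, not of your proof. Second, your closing sentence about Lemma 1 and Theorem 4 is not needed for the statement at hand and could be dropped.
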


\begin{center}\textbf{Construction} \end{center}

Given the construction for revision, based on order among
interpretation in $\mathcal{S}$, a construction for contraction can
be provided as:

$$Mod(KB\dot{-}\alpha)=Mod(KB)\cup Min(Mod(\{\neg\alpha\}\cup
IC),\leq_{KB}),$$ where $\leq_{KB}$ is the relation among
interpretations in $\mathcal{S}$ that satisfies the rationality
axioms $(\leq 1)$ to $(\leq 5)$. As in the case of revision, this
construction says what should be the models of the resulting Horn
knowledge base, and does not explicitly say what the resulting Horn
knowledge base is.

\newpage

\begin{center}\textbf{Representation theorem} \end{center}

Since the construction for contraction is based on a rational
contraction for revision, the following lemmae and theorem follow
obviously.

\begin{lemma} Let $KB$ be a Horn knowledge base, $\leq_{KB}$ an order
among $\mathcal{S}$ that satisfies $(\leq 1)$ to $(\leq 5)$. Let a
contraction operator $\dot{-}$ be defined as: for any sentence
$\alpha$, $Mod(KB\dot{-}\alpha)=Mod(KB)\cup
Min(Mod(\{\neg\alpha\}\cup IC),\leq_{KB})$. Then $\dot{-}$ satisfies
all the rationality postulates for contraction $(\dot{-}1)$ to
$(\dot{-}8)$.

\begin{proof} Follows from Theorem 1 and Theorem 3.
$\blacksquare$.
\end{proof}
\end{lemma}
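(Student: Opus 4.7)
The plan is to reduce the claim to the already-established machinery of Theorem 1 and Theorem 3 by recognizing that the construction given for $\dot{-}$ is nothing other than the contraction obtained via the Harper Identity from the revision operator built from the same order $\leq_{KB}$.

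First I would introduce the auxiliary revision operator $\dotplus$ defined by $Mod(KB\dotplus \beta)=Min(Mod(\{\beta\}\cup IC),\leq_{KB})$ for every sentence $\beta$. Since by hypothesis $\leq_{KB}$ satisfies $(\leq 1)$ to $(\leq 5)$, Lemma 1 (the ``soundness'' half of Theorem 1) applies, and therefore $\dotplus$ satisfies every revision postulate $(\dotplus 1)$ through $(\dotplus 7)$. Next, instantiating this formula at $\beta = \neg\alpha$ gives
$$Mod(KB\dotplus \neg\alpha)=Min(Mod(\{\neg\alpha\}\cup IC),\leq_{KB}),$$
so the contraction construction in the lemma statement can be rewritten as
$$Mod(KB\dot{-}\alpha)=Mod(KB)\cup Mod(KB\dotplus \neg\alpha),$$
which is exactly the Harper Identity. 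Thus $\dot{-}$ is the contraction derived via Harper from $\dotplus$.

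Finally I would invoke Theorem 3, which says that any contraction obtained by the Harper Identity from a revision operator satisfying $(\dotplus 1)$--$(\dotplus 7)$ satisfies all of $(\dot{-}1)$--$(\dot{-}8)$. Applied to our $\dotplus$, this immediately yields the desired conclusion.

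I do not expect a serious obstacle in this route, since it is a direct composition of two prior results; the only thing to double-check is that the order $\leq_{KB}$ used implicitly inside Theorem 1 is the same object fed into the Harper-side bookkeeping of Theorem 3, which is transparent from the definitions. A direct proof bypassing these theorems would be substantially harder: one would need to verify each of the eight contraction postulates by hand, with Recovery $(\dot{-}5)$ and the Conjunction postulates $(\dot{-}7)$--$(\dot{-}8)$ requiring delicate case analysis on how $Min(Mod(\{\neg\alpha\}\cup IC),\leq_{KB})$ interacts with the faithfulness and totality of $\leq_{KB}$. Routing through the Harper Identity neatly sidesteps this case work.
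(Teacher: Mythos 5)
Your proposal is correct and takes essentially the same route as the paper, which proves this lemma by the one-line appeal ``Follows from Theorem 1 and Theorem 3''; you have simply made explicit the intermediate steps, namely that the revision operator $Mod(KB\dotplus\beta)=Min(Mod(\{\beta\}\cup IC),\leq_{KB})$ satisfies $(\dotplus 1)$--$(\dotplus 7)$ by Lemma 1 and that the given contraction construction is precisely the Harper Identity applied to it, so Theorem 3 finishes the argument.
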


\begin{lemma} Let $KB$ be a Horn knowledge base and
$\dot{-}$ a contraction operator that satisfies all the rationality
postulates for contraction $(\dot{-}1)$ to $(\dot{-}8)$. Then, there
exists an order $\leq_{KB}$ among $\mathcal{S}$, that
satisfies$(\leq 1)$ to $(\leq 5)$, and for any sentence $\alpha$,
$Mod(KB\dot{-}\alpha)$ is given as $Mod(KB)\cup
Min(Mod(\{\neg\alpha\}\cup IC),\leq_{KB})$.

\begin{proof} Follows from Theorem 1 and Theorem
3.$\blacksquare$
\end{proof}
\end{lemma}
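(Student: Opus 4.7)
The plan is to derive Lemma 6 from Lemma 2 together with Theorem 2 (the Levi Identity theorem), exploiting the Levi/Harper duality between contraction and revision. At a high level, the contraction operator $\dot{-}$ induces a revision operator via the Levi Identity; the revision representation theorem then supplies a faithful order $\leq_{KB}$; and one only needs to verify that this order, when plugged into the contraction construction, reproduces the original $\dot{-}$.

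Concretely, I would first define $\dotplus$ from $\dot{-}$ by the Levi Identity, setting $Mod(KB\dotplus \alpha)=Mod(KB\dot{-}\neg\alpha)\cap Mod(\alpha)$. Since $\dot{-}$ satisfies $(\dot{-}1)$ through $(\dot{-}8)$, Theorem 2 guarantees that $\dotplus$ satisfies the revision postulates $(\dotplus 1)$ through $(\dotplus 7)$. Next I would invoke Lemma 2 on this $\dotplus$ to extract an order $\leq_{KB}$ on $\mathcal{S}$ satisfying $(\leq 1)$--$(\leq 5)$ and such that $Mod(KB\dotplus \alpha)=Min(Mod(\{\alpha\}\cup IC),\leq_{KB})$ for every sentence $\alpha$. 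This is the order that will witness the lemma.

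The remaining, and main, task is to check the identity $Mod(KB\dot{-}\alpha)=Mod(KB)\cup Min(Mod(\{\neg\alpha\}\cup IC),\leq_{KB})$. Unfolding the right-hand side via the chosen order gives $Mod(KB)\cup Mod(KB\dotplus\neg\alpha)$, which by the Levi definition becomes $Mod(KB)\cup\bigl(Mod(KB\dot{-}\alpha)\cap Mod(\neg\alpha)\bigr)$. So the heart of the proof is to establish that $Mod(KB\dot{-}\alpha)=Mod(KB)\cup\bigl(Mod(KB\dot{-}\alpha)\cap Mod(\neg\alpha)\bigr)$, i.e., that Harper applied to the Levi construction returns the original contraction. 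The right-to-left inclusion is immediate from Inclusion $(\dot{-}3)$. For the left-to-right direction I would decompose $Mod(KB\dot{-}\alpha)$ as $\bigl(Mod(KB\dot{-}\alpha)\cap Mod(\alpha)\bigr)\cup\bigl(Mod(KB\dot{-}\alpha)\cap Mod(\neg\alpha)\bigr)$ and use Recovery $(\dot{-}5)$ to absorb the $Mod(\alpha)$ part into $Mod(KB)$.

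I expect the main obstacle to be this last decomposition argument, because it is the only place where the contraction postulates do genuine work beyond what is already packaged inside Theorem 2 and Lemma 2; in particular Recovery is essential, and one must also handle the degenerate cases separately. When $\alpha \notin Cn_{P}(KB)$, Vacuity $(\dot{-}4)$ forces $Mod(KB\dot{-}\alpha)=Mod(KB)$, and one checks directly that $Min(Mod(\{\neg\alpha\}\cup IC),\leq_{KB})\subseteq Mod(KB)$ using faithfulness $(\leq 3)$; when $\alpha$ is unsatisfiable or inconsistent with $IC$, Success $(\dot{-}2)$ and the emptiness of $Mod(\{\neg\alpha\}\cup IC)$ make both sides reduce to $Mod(KB)$. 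Once these cases are dispensed with, the argument above gives the required equality, and the lemma follows.
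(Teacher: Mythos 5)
Your proposal follows the same route the paper intends---pass from $\dot{-}$ to a revision operator via the Levi Identity, extract the order from the revision representation result (Lemma 2), and translate back---but the paper's own proof is nothing more than the citation ``Follows from Theorem 1 and Theorem 3,'' so your write-up supplies essentially all of the actual content. In particular, the paper never verifies the one step that genuinely requires work, namely that Harper applied to the Levi construction returns the original contraction, i.e.\ $Mod(KB\dot{-}\alpha)=Mod(KB)\cup\bigl(Mod(KB\dot{-}\alpha)\cap Mod(\neg\alpha)\bigr)$; your derivation of this from Inclusion $(\dot{-}3)$ for one direction and the decomposition plus Recovery $(\dot{-}5)$ for the other (together with Preservation to identify $Mod(KB\dot{-}\neg\neg\alpha)$ with $Mod(KB\dot{-}\alpha)$) is exactly the missing piece, and it is correct. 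Two small remarks: the result needed for the first step is Theorem 2 (the Levi direction, contraction to revision), not Theorem 3 (the Harper direction) which the paper cites; and in your degenerate case where $\neg\alpha$ is unsatisfiable or inconsistent with $IC$, the collapse of $Mod(KB\dot{-}\alpha)$ to $Mod(KB)$ follows from Inclusion together with Recovery (since every interpretation in $\mathcal{S}$ then models $\alpha$), not from Success, whose precondition fails there. Neither point affects the soundness of your argument.
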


\begin{theorem} Let $KB$ be contracted by $\alpha$,
and $KB\dot{-}\alpha$ be obtained by the construction discussed
above. Then $\dot{-}$ is a contraction operator iff it satisfies all
the rationality postulates $(\dot{-}1)$ to $(\dot{-}8)$.

\begin{proof} Follows from Lemma 3 and Lemma 4.
$\blacksquare$
\end{proof}
\end{theorem}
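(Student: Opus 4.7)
The plan is to observe that this representation theorem has the usual two directions, and each direction is exactly one of the two preceding lemmas, so the whole argument is a direct combination: Lemma 3 gives the ``if'' direction (the constructed $\dot{-}$ satisfies $(\dot{-}1)$--$(\dot{-}8)$), while Lemma 4 gives the ``only if'' direction (every $\dot{-}$ satisfying $(\dot{-}1)$--$(\dot{-}8)$ is produced by our construction from some faithful pre-order $\leq_{KB}$ on $\mathcal{S}$). Thus the proof body can be as short as ``Follows from Lemma 3 and Lemma 4,'' and the real work is already packaged into those lemmas.

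For the ``if'' direction via Lemma 3, I would unfold the construction
\[
Mod(KB\dot{-}\alpha)=Mod(KB)\cup Min(Mod(\{\neg\alpha\}\cup IC),\leq_{KB})
\]
and use the Levi identity together with Theorem 1 to reduce to the already-established revision representation: given a $\leq_{KB}$ satisfying $(\leq 1)$--$(\leq 5)$, Theorem 1 tells us the induced $\dotplus$ satisfies $(\dotplus 1)$--$(\dotplus 7)$, and then Theorem 3 (Harper identity) transports this into all of $(\dot{-}1)$--$(\dot{-}8)$ for the $\dot{-}$ obtained from $\dotplus$. Since that $\dot{-}$ coincides with the one given by our construction (by a direct model-set computation using $Mod(KB\dotplus\neg\alpha)=Min(Mod(\{\neg\alpha\}\cup IC),\leq_{KB})$), this is what we need.

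For the ``only if'' direction via Lemma 4, I would start from a $\dot{-}$ satisfying $(\dot{-}1)$--$(\dot{-}8)$, define $\dotplus$ via the Levi identity, and apply Theorem 2 to conclude that $\dotplus$ satisfies $(\dotplus 1)$--$(\dotplus 7)$. Then the revision representation theorem (Lemma 2) supplies an order $\leq_{KB}$ on $\mathcal{S}$ satisfying $(\leq 1)$--$(\leq 5)$ with $Mod(KB\dotplus\beta)=Min(Mod(\{\beta\}\cup IC),\leq_{KB})$ for all $\beta$. Specializing to $\beta=\neg\alpha$ and combining with the Harper identity recovers exactly $Mod(KB\dot{-}\alpha)=Mod(KB)\cup Min(Mod(\{\neg\alpha\}\cup IC),\leq_{KB})$, which is the required construction.

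The main obstacle, if one wanted to write out everything in full rather than invoke the earlier results, would be verifying that the Levi/Harper translations preserve the representation data \emph{compatibly}: namely, that starting from $\dot{-}$, forming $\dotplus$ by Levi, extracting $\leq_{KB}$ from $\dotplus$ via Lemma 2, and then rebuilding $\dot{-}'$ from $\leq_{KB}$ really gives back the original $\dot{-}$ (and dually for the other direction). Because Theorems 2 and 3 already establish that the Levi and Harper identities respect the full postulate sets in both directions, this cycle closes at the level of model sets, and no new argument beyond citing Lemmas 3 and 4 is needed.
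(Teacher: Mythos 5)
Your proposal matches the paper's proof exactly: the theorem is dispatched by citing Lemma 3 for the direction ``construction implies postulates'' and Lemma 4 for the converse, with the underlying work delegated to the revision representation results (Lemma 1, Lemma 2, Theorem 1) transported through the Levi and Harper identities (Theorems 2 and 3), which is precisely the route the paper takes (its own proofs of Lemmas 3 and 4 are likewise one-line appeals to Theorems 1 and 3). Your closing remark about verifying that the Levi/Harper round-trip returns the original operator at the level of model sets is a reasonable acknowledgement of a subtlety the paper itself leaves implicit, and does not change the fact that the approach is the same.
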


\section{Relationship with the coherence approach of $AGM$}
Given Horn knowledge base $KB=<P,Ab,IC,K>$ represents a belief set
$KB^{\bullet}$ that is closed under $Cn_{P}$. We have defined how
$KB$ can be expanded, revised, or contracted. The question now is:
\it does our foundational approach (wrt classical first-order logic)
on $KB$ coincide with coherence approach (wrt our consequence
operator $Cn_{P}$) of $AGM$ on $KB^{\bullet}$?\rm~ There is a
problem in answering this question (similar practical problem
\cite{Ari}) , since our approach, we require $IC$ to be immutable,
and only the current knowledge $K$ is allowed to change. On the
contrary, $AGM$ approach treat every sentence in $KB^{\bullet}$
equally, and can throw out sentences from $Cn_{P}(IC)$. One way to
solve this problem is to assume that sentences in $Cn_{P}(IC)$ are
more entrenched than others. However, one-to-one correspondence can
be established, when $IC$ is empty. The key is our consequence
operator $Cn_{P}$, and in the following, we show that coherence
approach of $AGM$ with this consequence operator, is exactly same as
our foundational approach, when $IC$ is empty.

\subsection{Expansion}
Expansion in $AGM$ (see \cite{Alch})- framework is defined as
$KB\#\alpha=Cn_{P}(KB^{\bullet}\cup\{\alpha\})$, is is easy to see
that this is equivalent to our definition of expansion (when $IC$ is
empty), and is formalized below.

\begin{theorem} Let $KB+\alpha$ be an expansion of $KB$ by $\alpha$ (as defined
in section 3.2). Then $(KB+\alpha)^{\bullet}=KB\#\alpha.$

\begin{proof} By our definition of expansion, $(KB+\alpha)^{\bullet}=Cn_{P}(IC \cup
K\cup\{\alpha\})$, which is clearly the same set as
$Cn_{P}(KB^{\bullet}\cup \{\alpha\})$. $\blacksquare$
\end{proof}
\end{theorem}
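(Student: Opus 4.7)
The plan is to unfold both sides of the claimed equality using the definitions and then appeal to the properties of $Cn_P$ established in Section 2.2 (namely inclusion, monotony, and iteration).

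First, I would expand the left-hand side. By the definition of the belief set of a Horn knowledge base, $(KB+\alpha)^{\bullet} = Cn_P(K' \cup IC')$ where $KB+\alpha = \langle P, Ab, IC, K\cup\{\alpha\}\rangle$. Thus $(KB+\alpha)^{\bullet} = Cn_P(K \cup \{\alpha\} \cup IC)$. Next, I would expand the right-hand side: by the AGM definition of expansion, $KB\#\alpha = Cn_P(KB^{\bullet} \cup \{\alpha\}) = Cn_P(Cn_P(K \cup IC) \cup \{\alpha\})$. So the goal reduces to proving
\[
Cn_P(K \cup IC \cup \{\alpha\}) \;=\; Cn_P(Cn_P(K \cup IC) \cup \{\alpha\}).
\]

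For the forward inclusion, I would apply inclusion ($K \cup IC \subseteq Cn_P(K \cup IC)$) to get $K \cup IC \cup \{\alpha\} \subseteq Cn_P(K \cup IC) \cup \{\alpha\}$, and then invoke monotony. For the reverse inclusion, I would use monotony to get $Cn_P(K \cup IC) \subseteq Cn_P(K \cup IC \cup \{\alpha\})$, adjoin $\{\alpha\}$ (which trivially lies in the right-hand side by inclusion), giving $Cn_P(K \cup IC) \cup \{\alpha\} \subseteq Cn_P(K \cup IC \cup \{\alpha\})$; applying $Cn_P$ and then iteration yields $Cn_P(Cn_P(K \cup IC) \cup \{\alpha\}) \subseteq Cn_P(K \cup IC \cup \{\alpha\})$.

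There is no substantive obstacle here; the theorem is essentially a bookkeeping exercise, and the entire proof rides on the three properties inclusion, monotony, and iteration, which the excerpt has already verified for $Cn_P$. The only subtlety worth flagging is that the statement as given is implicitly assuming that the two sides of the equality are comparable objects (both belief sets in $\mathcal{L}$, not Horn knowledge bases), and that the equivalence relies on $IC$ being part of the material to which $Cn_P$ is applied on both sides — a point that would matter if one were attempting an analogous equivalence for revision or contraction, where $IC$ plays a different structural role.
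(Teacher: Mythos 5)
Your proposal is correct and follows essentially the same route as the paper: unfold $(KB+\alpha)^{\bullet}$ to $Cn_{P}(IC\cup K\cup\{\alpha\})$ and identify it with $Cn_{P}(KB^{\bullet}\cup\{\alpha\})$. The only difference is that the paper dismisses the final identification as ``clearly the same set,'' whereas you supply the missing justification via inclusion, monotony, and iteration --- a worthwhile addition, but not a different argument.
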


\subsection{Revision}

$AGM$ puts forward rationality postulates $(*1)$ to $(*8)$ to be
satisfied by a revision operator on $KB^{\bullet}$. reproduced
below:
\begin{enumerate}
\item[(*1)] (\it Closure) \rm $KB^{\bullet}*\alpha$ is a belief set.
\item[(*2)] (\it Success) $\alpha\in KB^{\bullet}*\alpha$.
\item[(*3)] (Expansion 1) $KB^{\bullet}*\alpha \subseteq KB^{\bullet}\# \alpha$.
\item[(*4)] (Expansion 2) \rm If $\neg \alpha \notin KB^{\bullet},$ then
$KB^{\bullet}\#\alpha\subseteq KB^{\bullet}*\alpha$.
\item[(*5)] (\it Consistency)\rm $KB^{\bullet}*\alpha$ is inconsistent iff
$\vdash \neg \alpha$.
\item[(*6)] (\it Preservation) \rm If $\vdash \alpha \leftrightarrow
\beta$, then $KB^{\bullet}*\alpha=KB^{\bullet}*\beta$.
\item[(*7)] (\it Conjunction 1) $KB^{\bullet}*(\alpha\land \beta)\subseteq
(KB^{\bullet}*\alpha)\#\beta$.
\item[(*8)] (Conjunction 2) \rm If $\neg \beta \notin KB^{\bullet}*\alpha$,
then,$(KB^{\bullet}*\alpha)\#\beta\subseteq
KB^{\bullet}*(\alpha\land \beta)$.
\end{enumerate}

The equivalence between our approach and $AGM$ approach is brought
out by the following two theorems.

\begin{theorem} Let $KB$ a Horn knowledge base with an empty $IC$ and $\dotplus $
be a revision function that satisfies all the rationality postulates
$(\dotplus 1)$ to $(\dotplus 7)$. Let a revision operator $*$ on
$KB^{\bullet}$ be defined as: for any sentence $\alpha$,
$KB^{\bullet}*\alpha=(KB\dotplus \alpha)^{\bullet}$. The revision
operator
*, thus defined satisfies all the $AGM$-postulates for revision
$(*1)$ to $(*8)$.

\begin{proof}~
\begin{enumerate}
\item[(*1)] $KB^{\bullet}*\alpha$ is a belief set.\\
This follows immediately, because $(KB\dotplus \alpha)^{\bullet}$ is
closed wrt $Cn_{P}$.
\item[(*2)] $\alpha\in KB^{\bullet}*\alpha$.\\
This follows from the fact that $\dotplus $ satisfies $(\dotplus
2)$.
\item[(*3)] $KB^{\bullet}*\alpha \subseteq KB^{\bullet}\# \alpha$.\\
\item[(*4)] If $\neg \alpha \notin KB^{\bullet},$ then
$KB^{\bullet}\#\alpha\subseteq KB^{\bullet}*\alpha$.\\
 These two postulates follow
from $(\dotplus 4)$ and theorem 5.
\item[(*5)]$KB^{\bullet}*\alpha$ is inconsistent iff
$\vdash \neg \alpha$.\\
This follows from from $(\dotplus 3)$ and our assumption that $IC$
is empty.
\item[(*6)] If $\vdash \alpha \leftrightarrow
\beta$, then $KB^{\bullet}*\alpha=KB^{\bullet}*\beta$.\\
This corresponds to $(\dotplus 5)$.
\item[(*7)] $KB^{\bullet}*(\alpha\land \beta)\subseteq
(KB^{\bullet}*\alpha)\#\beta$. This follows from $(\dotplus 6)$ and
theorem 5.
\item[(*8)] If $\neg \beta \notin KB^{\bullet}*\alpha$,
then,$(KB^{\bullet}*\alpha)\#\beta\subseteq
KB^{\bullet}*(\alpha\land \beta)$.\\
This follows from $(\dotplus 7)$ and theorem 5. $\blacksquare$
\end{enumerate}
\end{proof}
\end{theorem}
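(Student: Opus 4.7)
The plan is to verify each AGM postulate $(*1)$--$(*8)$ individually by unfolding the definition $KB^{\bullet}*\alpha = (KB\dotplus \alpha)^{\bullet}$ and translating the claim into an equivalent statement about the construction-level operator $\dotplus$, which we then discharge using the corresponding postulate $(\dotplus 1)$--$(\dotplus 7)$ together with Theorem 5 (the bridge $(KB+\alpha)^{\bullet}=KB^{\bullet}\#\alpha$). The assumption that $IC$ is empty is crucial: it removes the asymmetry between sentences in $Cn_P(IC)$ and other beliefs, so that the AGM postulates speak about the same set of models as the $\dotplus$ postulates.

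The correspondence is essentially one-to-one. For $(*1)$, closure is immediate because $(KB\dotplus \alpha)^{\bullet}=Cn_P(K'\cup IC')$ is $Cn_P$-closed by iteration of $Cn_P$. Postulate $(*2)$ is a direct restatement of success $(\dotplus 2)$ at the belief-set level. For $(*6)$, note that $\vdash \alpha\leftrightarrow\beta$ implies $Mod(\alpha)=Mod(\beta)$, hence $\alpha\equiv\beta$, and then $(\dotplus 5)$ applies. For $(*5)$, the fact that $IC$ is empty means ``$\alpha$ is satisfiable and $P$-consistent with $IC$'' collapses to ``$\not\vdash\neg\alpha$'', and the claim then reads off $(\dotplus 3)$.

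The interesting verifications are $(*3)$, $(*4)$, $(*7)$, $(*8)$, which mix revision and expansion. The plan for $(*3)$ is a case split: if $\neg\alpha\in KB^{\bullet}$, then $KB^{\bullet}\#\alpha=Cn_P(KB^{\bullet}\cup\{\alpha\})$ is inconsistent and contains everything, so inclusion is trivial; if $\neg\alpha\notin KB^{\bullet}$, vacuity $(\dotplus 4)$ gives $KB\dotplus\alpha\equiv KB+\alpha$, and Theorem 5 turns the right-hand side into $KB^{\bullet}\#\alpha$, yielding equality (hence inclusion). Postulate $(*4)$ is the matching $\supseteq$ direction of the same argument. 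For $(*7)$, the extended-vacuity postulate $(\dotplus 6)$ says $(KB\dotplus \alpha)+\beta$ implies $KB\dotplus(\alpha\land\beta)$; taking $Cn_P$ of both sides and applying Theorem 5 to rewrite $((KB\dotplus\alpha)+\beta)^{\bullet}$ as $(KB^{\bullet}*\alpha)\#\beta$ delivers exactly $(*7)$. Postulate $(*8)$ is the symmetric consequence of $(\dotplus 7)$, conditioned on $\neg\beta\notin KB^{\bullet}*\alpha = (KB\dotplus\alpha)^{\bullet}$, i.e.\ that $\neg\beta$ is not accepted in $KB\dotplus\alpha$, which is precisely the hypothesis of $(\dotplus 7)$.

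The main obstacle, and the only place that requires care, is keeping the translation between model-theoretic statements (as phrased in the $\dotplus$ postulates) and belief-set statements (as phrased in the $*$ postulates) coherent under $Cn_P$. Concretely, one has to check that taking $Cn_P$ commutes appropriately with set inclusion and intersection of models, and that Theorem 5 may be applied inside the conjunctive postulates with the revised knowledge base $KB\dotplus\alpha$ in place of $KB$. The empty-$IC$ hypothesis is used exactly to guarantee that $P$-consistency and classical non-refutability of $\neg\alpha$ coincide in $(*5)$, and that no hidden content of $IC$ blocks the AGM-style treatment; once these two observations are in place, the remaining verifications are essentially one-line appeals.
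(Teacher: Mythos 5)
Your proposal is correct and follows essentially the same route as the paper: a postulate-by-postulate translation of $(*1)$--$(*8)$ into the corresponding $(\dotplus 1)$--$(\dotplus 7)$, using Theorem 5 as the bridge between $\#$ and $+$ and the empty-$IC$ hypothesis for $(*5)$. The only difference is that you spell out the case split for $(*3)$ (inconsistent expansion when $\neg\alpha\in KB^{\bullet}$ versus the vacuity case), which the paper compresses into a single appeal to $(\dotplus 4)$ and Theorem 5; this is a welcome elaboration, not a different argument.
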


\begin{theorem} Let $KB$ a Horn knowledge base with an empty $IC$ and
* a revision operator that satisfies all the $AGM$-postulates $(*1)$
to $(*8)$. Let a revision function $+$ on $KB$ be defined as: for
any sentence $\alpha$, $(KB\dotplus
\alpha)^{\bullet}=KB^{\bullet}*\alpha$. The revision function $+$,
thus defined, satisfies all the rationality postulates $(\dotplus
1)$ to $(\dotplus 7)$.

\begin{proof}~
\begin{enumerate}
\item[$(\dotplus 1)$] $P,Ab$ and $IC$ do not change.\\
Obvious.
\item[$(\dotplus 2)$] $\alpha$ is accepted in $KB\dotplus \alpha$.\\
Follows from $(^*2)$.
\item[$(\dotplus 3)$] If $\alpha$ is satisfiable and consistent with $IC$,
then $KB\dotplus \alpha$ is consistent.\\
Since we have assumed $IC$ to be empty, this directly corresponds to
$(^*5)$.
\item[$(\dotplus 4)$] If $\neg\alpha$ is not accepted in $KB$, then
$KB\dotplus \alpha\equiv KB+\alpha$.\\ Follows from $(^*3)$ and
$(^*4)$.
\item[$(+5)$] If $KB\equiv KB'$ and $\alpha\equiv \beta$, then
$KB\dotplus \alpha\equiv KB'\dotplus \beta$.\\
Since $KB\equiv KB'$ they represent same belief set, i.e.
$KB^{\bullet}=KB'^{\bullet}$. Now, this postulate follows
immediately from $(^*6)$.
\item[$(\dotplus 6)$] $(KB\dotplus \alpha)+\beta$ implies $KB\dotplus (\alpha\land
\beta)$.\\
Corresponds to $(^*7)$.
\item[$(\dotplus 7)$] If $\neg \beta$ is not accepted in $KB\dotplus \alpha$, then
$KB\dotplus (\alpha\land \beta)$ implies $(KB\dotplus
\alpha)+\beta$.\\
Corresponds to $(^*8)$. $\blacksquare$
\end{enumerate}
\end{proof}
\end{theorem}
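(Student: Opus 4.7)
The plan is to establish each of the seven postulates $(\dotplus 1)$ through $(\dotplus 7)$ by pushing the defining identity $(KB\dotplus\alpha)^{\bullet}=KB^{\bullet}*\alpha$ through the AGM postulates $(*1)$ through $(*8)$, with Theorem 5 (namely $(KB+\alpha)^{\bullet}=KB\#\alpha$) serving as the bridge between the foundational expansion $+$ and the AGM expansion $\#$. Since each $(\dotplus i)$ is a model-theoretic assertion about $KB\dotplus\alpha$ and each $(*j)$ is a closure-theoretic assertion about $KB^{\bullet}*\alpha$, the proof reduces to a dictionary exercise once this correspondence is set up, and it is crucial that $IC$ is assumed empty so that $P$-consistency with $IC$ collapses to bare satisfiability.

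First I would handle the structural and simple postulates. $(\dotplus 1)$ holds by stipulation, because the definition of $\dotplus$ only constrains the resulting belief set and does not touch $P$, $Ab$, or $IC$. $(\dotplus 2)$ translates $(*2)$: $\alpha\in KB^{\bullet}*\alpha$ becomes $\alpha\in(KB\dotplus\alpha)^{\bullet}$, which is precisely acceptance in $KB\dotplus\alpha$. $(\dotplus 3)$ reduces to $(*5)$ once we note that an empty $IC$ identifies ``satisfiable and $P$-consistent with $IC$'' with plain satisfiability, and that $\not\vdash\neg\alpha$ under the semantics of $Cn_P$ is exactly satisfiability. $(\dotplus 5)$ follows from $(*6)$: $KB\equiv KB'$ forces $KB^{\bullet}=KB'^{\bullet}$ by the paper's definition of equivalence of Horn knowledge bases, and $\alpha\equiv\beta$ feeds the premise $\vdash\alpha\leftrightarrow\beta$ of $(*6)$ through the semantic reading of $\vdash$ as truth in every abductive interpretation.

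Next I would treat the postulates involving expansion, where Theorem 5 does the real work. For $(\dotplus 4)$, the hypothesis that $\neg\alpha$ is not accepted in $KB$ activates $(*3)$ and $(*4)$ to yield $KB^{\bullet}*\alpha=KB^{\bullet}\#\alpha$, and Theorem 5 rewrites the right-hand side as $(KB+\alpha)^{\bullet}$, which is the desired conclusion. For $(\dotplus 6)$, I would apply Theorem 5 to the Horn knowledge base $KB\dotplus\alpha$ to turn $(KB\dotplus\alpha)+\beta$ into the belief set $(KB^{\bullet}*\alpha)\#\beta$; then the inclusion supplied by $(*7)$ is exactly the inclusion demanded by $(\dotplus 6)$. $(\dotplus 7)$ is the parallel translation of $(*8)$, with ``$\neg\beta$ not accepted in $KB\dotplus\alpha$'' matching ``$\neg\beta\notin KB^{\bullet}*\alpha$'' via the defining identity.

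The main obstacle is bookkeeping the direction of implication between the two representations: a belief-set inclusion $A^{\bullet}\subseteq B^{\bullet}$ corresponds to the reverse model inclusion $Mod(B)\subseteq Mod(A)$, so when reading off a $(*j)$ as a $(\dotplus i)$ one must flip direction at every step, and mis-tracking this is the most likely source of error. A secondary concern is confirming that the defining identity, which is stated at the belief-set level, is strong enough to support the knowledge-base-level claims of $(\dotplus 4)$ and $(\dotplus 5)$; this is fine because $(\dotplus 1)$ already pins down $P$, $Ab$, and $IC$, and two Horn knowledge bases sharing those components with $P$-equivalent current knowledge are equivalent in the paper's sense. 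With these translations in hand, no postulate requires a fresh semantic argument and the whole proof becomes a lookup table keyed by the defining identity and Theorem 5.
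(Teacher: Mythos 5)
Your proposal is correct and follows essentially the same route as the paper: each postulate $(\dotplus i)$ is obtained by translating the corresponding AGM postulate(s) through the defining identity $(KB\dotplus\alpha)^{\bullet}=KB^{\bullet}*\alpha$, with the expansion correspondence of Theorem 5 bridging $+$ and $\#$ for $(\dotplus 4)$, $(\dotplus 6)$, and $(\dotplus 7)$. Your added care about the direction-reversal between belief-set inclusions and model inclusions, and about why the belief-set-level identity suffices given $(\dotplus 1)$, only makes explicit what the paper leaves implicit.
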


\subsection{Contraction}

$AGM$ puts forward rationality postulates $(-1)$ to $(-8)$ to be
satisfied by a contraction operator on closed set $KB^{\bullet}$,
reproduced below:

\begin{enumerate}
\item[$(-1)$] (\it Closure) \rm $KB^{\bullet}-\alpha$ is a belief set.
\item[$(-2)$] (\it Inclusion) \rm $KB^{\bullet}-\alpha\subseteq KB^{\bullet}$.
\item[$(-3)$] (\it Vacuity) \rm If $\alpha \notin KB^{\bullet}$, then
$KB^{\bullet}-\alpha=KB^{\bullet}$.
\item[$(-4)$] (\it Success) \rm If $\nvdash \alpha$, then $\alpha
\notin KB^{\bullet}-\alpha$.
\item[$(-5)$] (\it Preservation) \rm If $\vdash \alpha
\leftrightarrow \beta$, then
$KB^{\bullet}-\alpha=KB^{\bullet}-\beta$.
\item[$(-6)$] (\it Recovery) \rm $KB^{\bullet}\subset (KB^{\bullet}-\alpha)+\alpha$.
\item[$(-7)$] (\it Conjunction 1)\rm $KB^{\bullet}-\alpha\cap
KB^{\bullet}-\beta\subseteq KB^{\bullet}-(\alpha\land\beta)$.
\item[$(-8)$] (\it Conjunction 2) \rm If $\alpha \notin
KB^{\bullet}-(\alpha\land\beta)$, then
$KB^{\bullet}-(\alpha\land\beta)\subseteq KB^{\bullet}-\alpha$.
\end{enumerate}

As in the case of revision, the equivalence is brought out by the
following theorems. Since contraction is constructed in terms of
revision, these theorems are trivial.

\begin{theorem} Let $KB$ be a Horn knowledge base with an empty $IC$ and $\dot{-}$
be a contraction function that satisfies all the rationality
postulates $(\dot{-}1)$ to $(\dot{-}8)$. Let a contraction operator
$-$ on $KB^{\bullet}$ be defined as: for any sentence $\alpha$,
$KB^{\bullet}-\alpha=(KB\dot{-}\alpha)^{\bullet}$. The contraction
operator $-$, thus defined, satisfies all the $AGM$ - postulates for
contraction $(-1)$ to $(-8)$.

\begin{proof} Follows from Theorem 2 and Theorem 6.
$\blacksquare$
\end{proof}

\end{theorem}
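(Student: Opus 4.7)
The plan is to mirror the structure used for Theorem 6 (the revision case), translating each of the eight contraction postulates $(\dot{-}1)$ through $(\dot{-}8)$ into its AGM counterpart $(-1)$ through $(-8)$ by unfolding the definition $KB^{\bullet}-\alpha=(KB\dot{-}\alpha)^{\bullet}$. First I would note the mechanical translations: Closure $(-1)$ is immediate because $(KB\dot{-}\alpha)^{\bullet}$ is by construction closed under $Cn_{P}$; Inclusion $(-2)$ is the $Cn_{P}$-image of $(\dot{-}3)$, since $Mod(KB)\subseteq Mod(KB\dot{-}\alpha)$ reverses under closure; Vacuity $(-3)$ is the $\bullet$-translation of $(\dot{-}4)$; Success $(-4)$ uses $(\dot{-}2)$ together with the empty-$IC$ hypothesis, which makes classical derivability coincide with "true in every abductive interpretation"; Preservation $(-5)$ is $(\dot{-}6)$ combined with superclassicality of $Cn_{P}$; and the two conjunction postulates $(-7)$, $(-8)$ come respectively from $(\dot{-}7)$ and $(\dot{-}8)$ by taking $Cn_{P}$-closures of the stated model inclusions.

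Alternatively, and this is the route hinted at by the author's reference to Theorem 2 and Theorem 6, one can chain through revision: from $\dot{-}$ the Levi identity together with Theorem 2 produces a revision $\dotplus$ satisfying $(\dotplus 1)$ to $(\dotplus 7)$; Theorem 6 then turns $\dotplus$ into an AGM revision $*$ on $KB^{\bullet}$ satisfying $(*1)$ to $(*8)$; and the well-known AGM Harper identity $KB^{\bullet}-\alpha := KB^{\bullet}\cap(KB^{\bullet}*\neg\alpha)$ is known to yield an AGM contraction satisfying $(-1)$ to $(-8)$. The only side-obligation here is that this composite operator coincides with the $-$ defined in the statement, which reduces to checking that the model-level Harper identity $Mod(KB\dot{-}\alpha)=Mod(KB)\cup Mod(KB\dotplus\neg\alpha)$ used in Theorem 3 is compatible with the belief-set-level one once $Cn_{P}$ is applied; this compatibility is clean because $Cn_{P}$ turns unions of model sets into intersections of theories.

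The step I expect to be most delicate is Recovery $(-6)$: the expansion occurring inside $(KB^{\bullet}-\alpha)+\alpha$ must be reconciled with the model-theoretic inclusion $Mod(KB\dot{-}\alpha)\cap Mod(\alpha)\subseteq Mod(KB)$ coming from $(\dot{-}5)$, which requires invoking Theorem 5 to ensure that expansion performed at the belief-set level and expansion performed at the $KB$-level produce the same theory. Everywhere else the argument is a bookkeeping exercise, and the empty-$IC$ assumption removes the only real mismatch between our $P$-consistency notion and AGM's classical consistency, so the remaining postulates fall out without further subtlety.
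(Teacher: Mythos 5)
Your proposal is correct, and your second paragraph is precisely the route the paper takes: its entire proof is the citation ``Follows from Theorem 2 and Theorem 6,'' i.e.\ pass from $\dot{-}$ to a revision $\dotplus$ via the Levi identity (Theorem 2), transfer $\dotplus$ to an AGM revision $*$ on $KB^{\bullet}$ (Theorem 6), and then rely on the classical AGM fact that the Harper identity applied to an AGM revision yields an AGM contraction. What you add, and what the paper leaves entirely implicit, are the two side obligations: that the operator so obtained actually coincides with the $-$ defined in the statement (your observation that $Mod(KB)\cup\bigl(Mod(KB\dot{-}\alpha)\cap Mod(\neg\alpha)\bigr)=Mod(KB\dot{-}\alpha)$, using $(\dot{-}3)$ and Recovery $(\dot{-}5)$, is exactly the needed computation, after which $Cn_{P}$ converts the union of model classes into the intersection of theories), and that belief-set-level expansion agrees with $KB$-level expansion (Theorem 5), which is what makes Recovery $(-6)$ go through. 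Your first paragraph, the direct postulate-by-postulate verification mirroring the proof of Theorem 6, is a genuinely different and more self-contained argument: it avoids invoking the unstated external AGM theorem about Harper-derived contractions, at the cost of eight routine translations. The one point to keep an eye on in either route is Success $(-4)$ and its precondition $\nvdash\alpha$: classical non-derivability does not literally coincide with ``not true in every abductive interpretation'' even when $IC$ is empty (since $Cn_{P}(\emptyset)$ can exceed classical validity through the program $P$), so, as the paper itself does tacitly in Theorem 6 for $(*5)$, the AGM postulates must be read relative to $Cn_{P}$; with that reading your argument is complete.
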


\begin{theorem} Let $KB$ be a Horn knowledge base with an empty $IC$ and $-$ be a
contraction operator that satisfies all the $AGM$- postulates $(-1)$
to $(-8)$. Let a contraction function  $\dot{-}$ on $KB$ be defined
as: for any sentence $\alpha$,
$(KB\dot{-}\alpha)^{\bullet}=KB^{\bullet}-\alpha$. The contraction
function $\dot{-}$, thus defined, satisfies all the rationality
postulates $(\dot{-}1)$ to $(\dot{-}8)$.


\begin{proof} Follows from Theorem 3 and Theorem 7.
$\blacksquare$
\end{proof}
\end{theorem}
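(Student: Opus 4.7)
The plan is to leverage the duality between revision and contraction (the Levi and Harper identities) together with Theorems 3 and 7, exactly as the author hints. Given an AGM contraction operator $-$ on the belief set $KB^{\bullet}$ that satisfies $(-1)$ to $(-8)$, I first pass through revision at the coherence level: use the AGM Levi identity to define an AGM revision operator $*$ by $KB^{\bullet}*\alpha = (KB^{\bullet}-\neg\alpha)\#\alpha$. A standard AGM result (the coherence-side analogue already invoked implicitly in the preceding subsection) guarantees that $*$ satisfies $(*1)$ to $(*8)$.

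Next I transport this to the foundational side. By Theorem 7, from $*$ I obtain a revision function $\dotplus$ on $KB$ that satisfies $(\dotplus 1)$ to $(\dotplus 7)$ and is characterised by $(KB\dotplus\beta)^{\bullet}=KB^{\bullet}*\beta$ for every $\beta$. Now apply Theorem 3 (Harper identity) to this $\dotplus$: the contraction operator $\dot{-}^H$ it induces, defined through $Mod(KB\dot{-}^H\alpha)=Mod(KB)\cup Mod(KB\dotplus\neg\alpha)$, satisfies all the foundational contraction postulates $(\dot{-}1)$ to $(\dot{-}8)$.

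It remains to show that this induced $\dot{-}^H$ coincides with the $\dot{-}$ defined in the statement by $(KB\dot{-}\alpha)^{\bullet}=KB^{\bullet}-\alpha$. Chasing the identities, for every sentence $\alpha$ we have
\[
(KB\dot{-}^H\alpha)^{\bullet}=Cn_{P}(KB^{\bullet}\cup (KB\dotplus\neg\alpha)^{\bullet}) = Cn_{P}(KB^{\bullet}\cup (KB^{\bullet}*\neg\alpha))
\]
using Theorem 5 (expansion equivalence, with $IC=\emptyset$) for the first equality and the definition of $*$ for the second. Plugging in $*$'s definition through the AGM Levi identity and then applying the AGM Harper identity to $-$ collapses the right-hand side to $KB^{\bullet}-\alpha$, so the two contractions agree on belief sets, which is exactly what equivalence of foundational Horn knowledge bases requires (modulo $\equiv$).

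The main obstacle, and the only nontrivial bookkeeping, is this last coincidence step: making sure the two routes from the AGM contraction $-$ (direct via the statement's defining equation, and indirect via Levi $\to$ Theorem 7 $\to$ Harper) land on the same foundational operator. Once that is verified, every postulate $(\dot{-}i)$ is inherited automatically from Theorem 3, so no postulate-by-postulate calculation is needed and the theorem follows, precisely as the author asserts ``Follows from Theorem 3 and Theorem 7.''
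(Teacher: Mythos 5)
Your overall route is exactly the paper's: the paper's proof is the bare citation ``Follows from Theorem 3 and Theorem 7,'' and you correctly unfold this as Levi at the AGM level to get $*$ from $-$, then Theorem 7 to obtain a foundational revision $\dotplus$, then Theorem 3 (Harper) to obtain a foundational contraction, followed by a check that this contraction coincides with the one defined by $(KB\dot{-}\alpha)^{\bullet}=KB^{\bullet}-\alpha$. That coincidence check is indeed the only real content, and it is precisely there that your argument goes wrong. The Harper construction sets $Mod(KB\dot{-}^{H}\alpha)=Mod(KB)\cup Mod(KB\dotplus\neg\alpha)$, and the belief set of a knowledge base whose model class is a \emph{union} is the \emph{intersection} of the two belief sets, not the $Cn_{P}$-closure of their union: a sentence holds in every model of $Mod(KB)\cup Mod(KB\dotplus\neg\alpha)$ iff it lies in $KB^{\bullet}\cap(KB\dotplus\neg\alpha)^{\bullet}$. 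Your displayed equality $(KB\dot{-}^{H}\alpha)^{\bullet}=Cn_{P}\bigl(KB^{\bullet}\cup(KB\dotplus\neg\alpha)^{\bullet}\bigr)$ is therefore false in general (if $p\in KB^{\bullet}$ and $\neg p\in(KB\dotplus\neg\alpha)^{\bullet}$, the right-hand side is the inconsistent theory while the left-hand side is not), and Theorem 5 does not license it, since Theorem 5 concerns expansion, i.e.\ intersection of model classes, which corresponds to closure of the union of theories --- the dual situation.

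With the corrected identity the argument does close: $(KB\dot{-}^{H}\alpha)^{\bullet}=KB^{\bullet}\cap KB^{\bullet}*\neg\alpha=KB^{\bullet}\cap\bigl((KB^{\bullet}-\alpha)\#\neg\alpha\bigr)$, and the standard AGM fact that Harper applied to the Levi revision recovers the original contraction --- which relies on inclusion $(-2)$ and recovery $(-6)$ --- yields $KB^{\bullet}-\alpha$. You should prove or at least explicitly cite that AGM recovery-of-contraction fact; as written, the claim that the expression ``collapses to $KB^{\bullet}-\alpha$'' is asserted on top of an incorrect equation. The rest of your plan (inheriting each postulate $(\dot{-}i)$ from Theorem 3, and the remark that equality of belief sets gives equivalence of the foundational knowledge bases) is sound and matches what the author evidently intended.
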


\section{Realizing Horn knowledge base dynamics using abductive explanations}
In this section, we explore how belief dynamics can be realized in
practice (see \cite{Bess,Arav1,Arav}). Here, we will see how
revision can be implemented based on the construction using models
of revising sentence and an order among them. The notion of
abduction proves to be useful and is explained in the sequel.

Let $\alpha$ be a sentence in $\mathcal{L}$. An \it abductive
explanation \rm for $\alpha$ wrt $KB$ is a set of abductive literals
\footnotemark \footnotetext{An abductive literal is either an
abducible $A$ from $Ab$, or its negation $\neg A$.} $\Delta$ s.t.
$\Delta$ consistent with $IC$ and $\Delta \models_{P}\alpha$ (that
is $\alpha \in Cn_P(\Delta))$. Further $\Delta$ is said to be \it
minimal \rm iff no proper subset of $\Delta$ is an abductive
explanation for $\alpha$.

The basic idea to implement revision of a Horn knowledge base $KB$
by a sentence $\alpha$, is to realize $Mod(\{\alpha\}\cup IC)$ in
terms of abductive explanations for $\alpha$ wrt $KB$. We first
provide a useful lemma.

\begin{definition} Let $KB$ be a Horn knowledge base, $\alpha$ a sentence, and
$\Delta_{1}$ and $\Delta_{2}$ be two minimal abductive explanations
for $\alpha$ wrt $KB$. Then, the \it disjunction \rm of $\Delta_{1}$
and $\Delta_{2}$, written as $\Delta_{1}\lor \Delta_{2}$, is given
as:
$$\Delta_{1}\lor\Delta_{2}=(\Delta_{1}\cap \Delta_{2})\cup
\{\alpha\lor\beta| \alpha \in\Delta_{1}\backslash \Delta_{2}~
\text{and}~\beta\in \Delta_{2}\backslash \Delta_{1}\}.$$ Extending
this to $\Delta^{\bullet}$, a set of minimal abductive explanations
for $\alpha$ wrt $KB$, $\lor\Delta^{\bullet}$ is given by the
disjunction of all elements of $\Delta^{\bullet}$.
\end{definition}

\begin{lemma} Let $KB$ be a Horn knowledge base, $\alpha$ a sentence,and
$\Delta_{1}$ and $\Delta_{2}$ be two minimal abductive explanations
for $\alpha$ wrt $KB$.
Then,$Mod(\Delta_{1}\lor\Delta_{2})=Mod(\Delta_{1})\cup
Mod(\Delta_{2})$.

\begin{proof} First we show that every model of $\Delta_{1}$ is a model of
$\Delta_{1}\lor\Delta_{2}$. Clearly, a model $M$ of $\Delta_{1}$
satisfies all the sentences in $(\Delta_{1}\cap \Delta_{2})$. The
other sentences in $(\Delta_{1}\lor\Delta_{2})$ are of the form
$\alpha \lor\beta$, where $\alpha$ is from $\Delta_{1}$ and $\beta$
is from $\Delta_{2}$. Since $M$ is a model of $\Delta_{1}$, $\alpha$
is true in $M$, and hence all such sentences are satisfied by $M$.
Hence $M$ is a model of $\Delta_{1}\lor\Delta_{2}$ too. Similarly,
it can be shown that every model of $\Delta_{2}$ is a model of
$\Delta_{1}\lor\Delta_{2}$ too.

Now, it remains to be shown that every model $M$ of
$\Delta_{1}\lor\Delta_{2}$ is either a model of $\Delta_{1}$ or a
model of $\Delta_{2}$. We will now show that if $M$ is not a model
of $\Delta_{2}$, then it must be a model of $\Delta_{1}$. Since $M$
satisfies all the sentences in $(\Delta_{1}\cap\Delta_{2})$, we need
only to show that $M$ also satisfies all the sentences in
$\Delta_{1}\backslash\Delta_{2}$. For every element $\alpha \in
\Delta_{1}\backslash\Delta_{2}$: there exists a subset of
$(\Delta_{1}\lor \Delta_{2})$, $\{\alpha\lor \beta|\beta \in
\Delta_{2}\backslash\Delta_{2}\}$. $M$ satisfies all the sentences
in this subset. Suppose $M$ does not satisfy $\alpha$, then it must
satisfy all $\beta\in \Delta_{1}\backslash\Delta_{2}$. This implies
that $M$ is a model of $\Delta_{2}$, which is a contradictory to our
assumption. Hence $M$ must satisfy $\alpha$, and thus a model
$\Delta_{1}$. Similarly, it can be shown that $M$ must be a model of
$\Delta_{2}$ if it is not a model of $\Delta_{1}$. $\blacksquare$
\end{proof}
\end{lemma}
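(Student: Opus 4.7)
The plan is to prove the equality $Mod(\Delta_1 \lor \Delta_2) = Mod(\Delta_1) \cup Mod(\Delta_2)$ by two separate inclusions, treating $\Delta_1 \lor \Delta_2$ as the syntactic rendering of the propositional distributive law $(A_1 \land \ldots \land A_k) \lor (B_1 \land \ldots \land B_m) \equiv \bigwedge_{i,j}(A_i \lor B_j)$ restricted to the ``differing'' parts, while shared literals in $\Delta_1 \cap \Delta_2$ are kept as-is.

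For the inclusion $Mod(\Delta_1) \cup Mod(\Delta_2) \subseteq Mod(\Delta_1 \lor \Delta_2)$, I would fix $M \in Mod(\Delta_1)$ (the other case being symmetric). Every sentence in $\Delta_1 \cap \Delta_2$ is a member of $\Delta_1$, so $M$ satisfies it; and every disjunctive sentence $\alpha \lor \beta$ appearing in $\Delta_1 \lor \Delta_2$ has its left disjunct $\alpha \in \Delta_1 \setminus \Delta_2$ true in $M$, so the disjunction is true in $M$. Hence $M \in Mod(\Delta_1 \lor \Delta_2)$.

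For the reverse inclusion, I would take an arbitrary $M \in Mod(\Delta_1 \lor \Delta_2)$ and argue by contrapositive: if $M \notin Mod(\Delta_2)$, then $M \in Mod(\Delta_1)$. Since $M$ satisfies $\Delta_1 \cap \Delta_2$ (these literals are explicitly in $\Delta_1 \lor \Delta_2$), any failure of $M$ on $\Delta_2$ must occur on some witness $\beta_0 \in \Delta_2 \setminus \Delta_1$. This is the key move: having fixed this single $\beta_0$, I consider, for each $\alpha \in \Delta_1 \setminus \Delta_2$, the sentence $\alpha \lor \beta_0$ which belongs to $\Delta_1 \lor \Delta_2$ by construction. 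Since $M$ satisfies it and falsifies $\beta_0$, $M$ must satisfy $\alpha$. Combined with the already-established satisfaction of $\Delta_1 \cap \Delta_2$, this shows $M$ satisfies every literal of $\Delta_1$, so $M \in Mod(\Delta_1)$. The symmetric argument handles the case $M \notin Mod(\Delta_1) \Rightarrow M \in Mod(\Delta_2)$.

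The main subtlety (and the step I expect to be most error-prone) is the reverse inclusion: one must be careful to \emph{fix} a single falsifying $\beta_0 \in \Delta_2 \setminus \Delta_1$ before varying $\alpha$, rather than quantifying over pairs $(\alpha,\beta)$ independently; otherwise the argument can appear to require $M$ to satisfy every $\beta \in \Delta_2 \setminus \Delta_1$, which is not what is given. Everything else is routine propositional reasoning, and notably the proof does not use minimality of $\Delta_1$ or $\Delta_2$, nor any property specific to $Cn_P$ beyond the standard semantics of $\land$ and $\lor$ under abductive interpretations introduced in Section~2.1.
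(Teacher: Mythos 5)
Your proof is correct and takes essentially the same approach as the paper's: both inclusions are argued directly from the definition of $\Delta_{1}\lor\Delta_{2}$, and the reverse inclusion rests on the same use of the sentences $\alpha\lor\beta$ spanning $\Delta_{1}\setminus\Delta_{2}$ and $\Delta_{2}\setminus\Delta_{1}$. The only cosmetic difference is that you fix a single falsified $\beta_{0}\in\Delta_{2}\setminus\Delta_{1}$ and read off each $\alpha$ directly, whereas the paper fixes $\alpha$ and derives a contradiction from its failure (modulo some typos in the paper such as ``$\Delta_{2}\backslash\Delta_{2}$''); your observation that minimality is never actually needed is also accurate.
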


As one would expect, all the models of revising sentence $\alpha$
can be realized in terms abductive explanations for $\alpha$, and
the relationship is precisely stated below.

\begin{lemma} Let $KB$ be a Horn knowledge base, $\alpha$ a sentence, and
$\Delta^{\bullet}$ the set of all minimal abductive explanations for
$\alpha$ wrt $KB$. Then $Mod(\{\alpha\}\cup
IC)=Mod(\lor\Delta^{\bullet})$.

\begin{proof} It can be easily verified that every model $M$ of a minimal
abductive explanation is also a model of $\alpha$. Since every
minimal abductive explanation satisfies $IC$, $M$ is a model of
$\alpha\cup IC$. It remains to be shown that every model $M$ of
$\{\alpha\}\cup IC$ is a model of  one of the minimal abductive
explanations for $\alpha$ wrt $KB$. This can be verified by
observing that a minimal abductive explanation for $\alpha$ wrt $KB$
can be obtained from $M$. $\blacksquare$
\end{proof}
\end{lemma}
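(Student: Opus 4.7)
The plan is to establish the two set inclusions separately, leaning on the preceding disjunction lemma and on the compactness and superclassicality of $Cn_{P}$ proved in Section~2.2.

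\textbf{Inclusion} $Mod(\lor\Delta^{\bullet})\subseteq Mod(\{\alpha\}\cup IC)$. First I would extend the binary disjunction lemma to the whole family $\Delta^{\bullet}$, showing $Mod(\lor\Delta^{\bullet})=\bigcup_{\Delta\in\Delta^{\bullet}}Mod(\Delta)$; the argument is the same as for two explanations, and compactness of $Cn_{P}$ guarantees that only finitely many disjuncts matter for any given sentence. Then fix an abductive interpretation $M\in Mod(\Delta)$ for some minimal abductive explanation $\Delta$. By definition of an abductive explanation, $\Delta$ is $P$-consistent with $IC$ (hence $M\models_{P} IC$) and $\Delta\models_{P}\alpha$ (hence $M\models_{P}\alpha$). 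So $M\in Mod(\{\alpha\}\cup IC)$, and this direction is routine.

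\textbf{Inclusion} $Mod(\{\alpha\}\cup IC)\subseteq Mod(\lor\Delta^{\bullet})$. This is the substantive half. Given any $M\in Mod(\{\alpha\}\cup IC)$, I would first form the \emph{full} abductive literal set determined by $M$, namely $\Delta_{M}=\{A\in Ab\mid A\text{ true in }M\}\cup\{\neg A\mid A\in Ab,\;A\text{ false in }M\}$. Since $P$ is acyclic with all abducibles at level $0$ and no non-abducible at level $0$, the truth value of every ground atom in any abductive interpretation is determined by the restriction to $Ab$; hence $\Delta_{M}$ is $P$-consistent with $IC$ (because $M\models_{P}IC$) and $\Delta_{M}\models_{P}\alpha$ (because $M\models_{P}\alpha$ and every abductive model of $\Delta_{M}$ coincides with $M$ on abducibles, hence on everything). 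Thus $\Delta_{M}$ is an abductive explanation for $\alpha$ wrt $KB$. Shrinking $\Delta_{M}$ by removing literals one at a time while preserving $\Delta\models_{P}\alpha$ and consistency with $IC$ yields a minimal $\Delta\subseteq\Delta_{M}$; here compactness of $Cn_{P}$ is what lets the shrinking process terminate (or, more abstractly, Zorn's lemma applied to the poset of subsets of $\Delta_{M}$ that still entail $\alpha$). Since $M$ models $\Delta_{M}\supseteq\Delta$, it also models $\Delta$, so $M\in Mod(\Delta)\subseteq Mod(\lor\Delta^{\bullet})$ by the extended disjunction identity.

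\textbf{Main obstacle.} The delicate step is the construction of a \emph{minimal} $\Delta$ inside $\Delta_{M}$ that still has $M$ as a model. One must argue that shrinking does not accidentally lose $M$; this works because every literal of $\Delta_{M}$ is, by construction, true in $M$, so any subset of $\Delta_{M}$ is satisfied by $M$. Consequently the set $\{\Delta\subseteq\Delta_{M}\mid\Delta\text{ is }P\text{-consistent with }IC\text{ and }\Delta\models_{P}\alpha\}$ is nonempty and closed under descending chains (by compactness), so it contains a minimal element, which is the required minimal abductive explanation. Combining the two inclusions gives $Mod(\{\alpha\}\cup IC)=Mod(\lor\Delta^{\bullet})$. $\blacksquare$
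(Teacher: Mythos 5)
Your proposal follows essentially the same route as the paper's proof: the forward inclusion comes straight from the definition of an abductive explanation, and the converse rests on the observation that a minimal abductive explanation can be extracted from any model $M$ of $\{\alpha\}\cup IC$ --- you simply make explicit (via the complete literal set $\Delta_{M}$, the acyclicity/level-$0$ argument, and compactness to justify shrinking to a minimal subset still satisfied by $M$) what the paper compresses into the single sentence ``a minimal abductive explanation can be obtained from $M$.'' The only caution is that both you and the paper pass from ``$\Delta$ is consistent with $IC$'' to ``every model of $\Delta$ satisfies $IC$,'' which is a leap unless one tacitly restricts attention to interpretations in $\mathcal{S}=Mod(IC)$; since you reproduce the paper's own convention here, this is a shared looseness rather than a defect of your argument.
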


Thus, we have a way to generate all the models of $\{\alpha\}\cup
IC$, and we just need to select a subset of this based on an order
that satisfies $(\leq 1)$ to $(\leq 5)$. Suppose we have such an
order that satisfies all the required postulates, then this order
can be mapped to a particular set of abductive explanations for
$\alpha$ wrt $KB$. This is stated precisely in the following
theorem. An important implication of this theorem is that there is
no need to compute all the abductive explanations for $\alpha$ wrt
$KB$. However, it does not say which abductive explanations need to
be computed.

\begin{theorem} Let $KB$ be a Horn knowledge base, and $\leq_{KB}$ be an order
among abductive interpretations in $\mathcal{S}$ that satisfies all
the rationality axioms $(\leq 1)$ to $(\leq 5)$. Then, for every
sentence $\alpha$, there exists $\Delta^{\bullet}$ a set of minimal
abductive explanations for $\alpha$ wrt $KB$, s.t.
$Min(Mod(\{\alpha\}\cup IC),\leq_{KB})$ is a subset of $Mod(\lor
\Delta^{\bullet})$, and this does not hold for any proper subset of
$\Delta^{\bullet}$.

\begin{proof} From Lemma 6. and Lemma 5., it is clear that
$Mod(\{\alpha\}\cup IC)$ is the union of all the models of all
minimal abductive explanations of $\alpha$ wrt $KB$. $Min$ selects a
subset of this, and the theorem follows immediately. $\blacksquare$.
\end{proof}
\end{theorem}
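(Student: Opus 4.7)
The plan is to lean on Lemma~5 and Lemma~6, which together show that $Mod(\{\alpha\}\cup IC)$ is precisely the union of the model sets of all minimal abductive explanations of $\alpha$ wrt $KB$. The theorem then just says that among this collection we can isolate a particular subfamily $\Delta^{\bullet}$ whose disjunction covers the preferred models selected by $\leq_{KB}$, and which is irredundant in doing so.

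First I would start from the set $\Delta^{\bullet}_{\text{all}}$ of \emph{all} minimal abductive explanations for $\alpha$ wrt $KB$. By Lemma~6, $Mod(\{\alpha\}\cup IC)=Mod(\lor\Delta^{\bullet}_{\text{all}})$, and by iterating Lemma~5 we have $Mod(\lor\Delta^{\bullet}_{\text{all}})=\bigcup_{\Delta\in\Delta^{\bullet}_{\text{all}}}Mod(\Delta)$. In particular, every $M\in Min(Mod(\{\alpha\}\cup IC),\leq_{KB})$ lies in $Mod(\Delta_M)$ for at least one $\Delta_M\in\Delta^{\bullet}_{\text{all}}$. Pick one such $\Delta_M$ for each minimal $M$ (invoking choice in the infinite case) and let $\Delta^{\bullet}_0=\{\Delta_M : M\in Min(Mod(\{\alpha\}\cup IC),\leq_{KB})\}$. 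The inclusion $Min(Mod(\{\alpha\}\cup IC),\leq_{KB})\subseteq Mod(\lor\Delta^{\bullet}_0)$ is then immediate from the iterated Lemma~5.

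Next I would enforce minimality. Starting from $\Delta^{\bullet}_0$, I would delete any $\Delta\in\Delta^{\bullet}_0$ whose removal still leaves the inclusion $Min(Mod(\{\alpha\}\cup IC),\leq_{KB})\subseteq Mod(\lor(\Delta^{\bullet}_0\setminus\{\Delta\}))$ valid. Since the set of ``good'' subsets (those whose disjunctions still cover the preferred models) is closed under pruning single redundant elements, iterating this process (or applying Zorn's lemma, using that the intersection of a descending chain of good subsets is still good by the union-of-models characterisation from Lemma~5) yields a subset $\Delta^{\bullet}\subseteq\Delta^{\bullet}_0$ that is both sufficient and irredundant; this is the required witness.

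The main obstacle I anticipate is the irredundancy step in the infinite case, because Lemma~5 is stated only for two explanations. I would need to argue that its conclusion $Mod(\Delta_1\lor\Delta_2)=Mod(\Delta_1)\cup Mod(\Delta_2)$ lifts to arbitrary families, so that $Mod(\lor\Delta^{\bullet})=\bigcup_{\Delta\in\Delta^{\bullet}}Mod(\Delta)$ holds for any collection $\Delta^{\bullet}$ of minimal explanations; only then does the ``remove a redundant explanation'' step have a clean semantic meaning and only then does the Zorn-style argument go through. Once that extension is in hand, the rest of the proof is essentially bookkeeping on top of Lemma~5 and Lemma~6.
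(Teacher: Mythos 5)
Your proposal follows essentially the same route as the paper, whose entire proof is the two-sentence observation that, by Lemma~5 and Lemma~6, $Mod(\{\alpha\}\cup IC)$ is the union of the model sets of the minimal abductive explanations, that $Min$ selects a subset of this union, and that one may therefore take a covering subfamily and prune it to an irredundant one. Your elaboration of the covering step --- lift Lemma~5 from pairs to arbitrary finite disjunctions so that $Mod(\lor\Delta^{\bullet})=\bigcup_{\Delta\in\Delta^{\bullet}}Mod(\Delta)$, then choose one explanation $\Delta_M$ per preferred model $M$ --- is exactly what the paper leaves implicit, and your observation that Lemma~5 is stated only for two explanations is a fair criticism of the paper's ``follows immediately.''

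One caution about the extra detail you supply for the irredundancy clause: the Zorn-style argument does not work as you describe it. The collection of ``good'' subfamilies (those whose disjunction still covers $Min(Mod(\{\alpha\}\cup IC),\leq_{KB})$) is \emph{not} closed under intersections of descending chains: a preferred model $M$ may be witnessed in each member of the chain by a different explanation, so the intersection may witness nothing. Concretely, if $M$ is a model of each of $\Delta_1,\Delta_2,\dots$, the chain $\{\Delta_n,\Delta_{n+1},\dots\}_{n\ge 1}$ consists of good subfamilies with empty intersection. Worse, when infinitely many minimal explanations are genuinely needed (say $Mod(\Delta_n)$ covers $M_1,\dots,M_n$ but not $M_{n+1}$), a minimal covering subfamily need not exist at all, so the ``does not hold for any proper subset'' clause fails. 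The statement and your argument are both fine when there are only finitely many minimal abductive explanations (e.g.\ finite $Ab$), which is evidently the case the paper has in mind; but the gap is real in general, and it is one the paper's own one-line proof silently shares rather than resolves.
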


The above theorem 10. is still not very useful in realizing
revision. We need to have an order among all the interpretations
that satisfies all the required axioms, and need to compute all the
abductive explanations for $\alpha$ wrt $KB$. The need to compute
all abductive explanations arises from the fact that the converse of
the above theorem does not hold in general. This scheme requires an
universal order $\leq$, in the sense that same order can be used for
any Horn knowledge base. Otherwise, it would be necessary to specify
the new order to be used for further modifying $(KB\dotplus\alpha)$.
However, even if the order can be worked out, it is not desirable to
demand all abductive explanations of $\alpha$ wrt $KB$ be computed.
So, it is desirable to work out, when the converse of the above
theorem is true. The following theorem says that, suppose $\alpha$
is rejected in $KB$, then revision of $KB$ by $\alpha$ can be worked
out in terms of some abductive explanations for $\alpha$ wrt $KB$.

\begin{theorem} Let $KB$ be a Horn knowledge base, and a revision function
$\dotplus$ be defined as: for any sentence $\alpha$ that is rejected
in $KB$, $Mod(KB\dotplus\alpha)$ is a non-empty subset of
$Mod(\lor\Delta^{\bullet})$, where $\Delta^{\bullet}$ is a set of
all minimal abductive explanations for $\alpha$ wrt $KB$. Then,
there exists an order $\leq_{KB}$ among abductive interpretations in
$\mathcal{S}$, s.t. $\leq_{KB}$ satisfies all the rationality axioms
$(\leq 1)$ to $(\leq 5)$ and $Mod(KB\dotplus\alpha)=
Min(Mod(\{\alpha\}\cup IC),\leq_{KB})$.

\begin{proof} It is easy to define a pre-order s.t. every model of
$Mod(KB\dotplus\alpha)$ is strictly minimal than all other
interpretations. It is easy to verify that such a pre-order
satisfies $(\leq 1)$ to $(\leq 5)$. In particular, since $\alpha$ is
rejected in $KB$, $(\leq 3)$ faithfulness is satisfied, and since
non-empty subset of $Mod(\lor\Delta^{\bullet})$ is selected, $(\leq
4)$ is also satisfied. $\blacksquare$
\end{proof}
\end{theorem}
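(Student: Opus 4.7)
The plan is to define a three-tiered pre-order on $\mathcal{S}$ that places the models of $KB$ at the bottom, the prescribed revision result $Mod(KB\dotplus\alpha)$ just above them, and everything else at the top. Concretely I would introduce a rank function $r:\mathcal{S}\to\{0,1,2\}$ by setting $r(I)=0$ if $I\in Mod(KB)$, $r(I)=1$ if $I\in Mod(KB\dotplus\alpha)\setminus Mod(KB)$, and $r(I)=2$ otherwise, and then declare $I\leq_{KB}J$ iff $r(I)\leq r(J)$. Because ranks lie in a totally ordered finite set, $(\leq 1)$ reflexivity/transitivity and $(\leq 2)$ connectivity are immediate, and every non-empty $\mathcal{F}\subseteq\mathcal{S}$ attains its minimum rank, yielding $(\leq 4)$ minimality.

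For $(\leq 3)$ faithfulness, assuming $KB$ is $P$-consistent so that $Mod(KB)\neq\emptyset$, the rank-$0$ layer is non-empty, making $Min(\mathcal{S},\leq_{KB})$ coincide exactly with $Mod(KB)$. For $(\leq 5)$ preservance, if $KB\equiv KB'$ then $Mod(KB)=Mod(KB')$ and (since $\dotplus$ is assumed well-defined on equivalent knowledge bases) $Mod(KB\dotplus\alpha)=Mod(KB'\dotplus\alpha)$, so the rank functions coincide and the induced orders are equal.

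The heart of the argument is the identity $Mod(KB\dotplus\alpha)=Min(Mod(\{\alpha\}\cup IC),\leq_{KB})$. The hypothesis that $\alpha$ is rejected in $KB$ is crucial here: it tells us $\neg\alpha$ holds in every model of $KB$, so $Mod(KB)\cap Mod(\{\alpha\}\cup IC)=\emptyset$ and no rank-$0$ interpretation lies in $Mod(\{\alpha\}\cup IC)$. Combining this with Lemma 6, which yields $Mod(KB\dotplus\alpha)\subseteq Mod(\lor\Delta^{\bullet})=Mod(\{\alpha\}\cup IC)$, and with the hypothesis that $Mod(KB\dotplus\alpha)$ is non-empty, the minimum rank attained inside $Mod(\{\alpha\}\cup IC)$ is exactly $1$, and the rank-$1$ layer intersected with $Mod(\{\alpha\}\cup IC)$ equals $Mod(KB\dotplus\alpha)$ by construction.

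The main obstacle is really conceptual rather than computational: the order I build depends on $\alpha$ through its second layer, even though the theorem writes $\leq_{KB}$ with no $\alpha$. I would therefore be explicit that, for each rejected $\alpha$, a dedicated order is produced which satisfies $(\leq 1)$--$(\leq 5)$ and represents $\dotplus$ on that $\alpha$; producing a single \emph{universal} order good for all $\alpha$ simultaneously would require $\dotplus$ to satisfy the additional coherence conditions of Lemma 2, which the hypothesis here deliberately does not impose. This restriction is in fact the very phenomenon the paragraph following the theorem alludes to when it flags the need for a universal order in a genuinely practical realisation.
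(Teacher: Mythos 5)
Your proof is correct and follows the same basic idea as the paper's (one-line) argument: build a layered total pre-order that makes the prescribed set $Mod(KB\dotplus\alpha)$ minimal among the candidates in $Mod(\{\alpha\}\cup IC)$. Your three-tier refinement is in fact more careful than the paper's literal wording: the paper says to make every model of $KB\dotplus\alpha$ ``strictly minimal than all other interpretations,'' which, read literally, would place them below $Mod(KB)$ and so violate faithfulness $(\leq 3)$, since rejection of $\alpha$ in $KB$ forces $Mod(KB)\cap Mod(KB\dotplus\alpha)=\emptyset$. Your rank-$0$ layer for $Mod(KB)$ repairs this, and your observation that rejection of $\alpha$ keeps rank $0$ out of $Mod(\{\alpha\}\cup IC)$ is exactly the point the paper gestures at with ``since $\alpha$ is rejected in $KB$, $(\leq 3)$ is satisfied.'' Your closing caveats --- that the order is built per $\alpha$ rather than universally, and that $(\leq 5)$ needs $\dotplus$ to respect equivalence of knowledge bases --- are both legitimate and are only implicitly (or not at all) acknowledged in the paper, the first of them in the paragraph following the theorem.
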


An important corollary of this theorem is that, revision of $KB$ by
$\alpha$ can be realized just by computing \it one \rm abductive
explanation of $\alpha$ wrt $KB$, and is stated below.

\begin{corollary} Let $KB$ be a Horn knowledge base, and a revision function
$\dotplus$ be defined as: for any sentence $\alpha$ that is rejected
in $KB$, $Mod(KB\dotplus\alpha)$ is a non-empty subset of
$Mod(\Delta)$, where $\Delta$ is an abductive explanations for
$\alpha$ wrt $KB$. Then, there exists an order $\leq_{KB}$ among
abductive interpretations in $\mathcal{S}$, s.t. $\leq_{KB}$
satisfies all the rationality axioms $(\leq 1)$ to $(\leq 5)$ and
$Mod(KB\dotplus\alpha)= Min(Mod(\{\alpha\}\cup IC),\leq_{KB})$.
$\blacksquare$
\end{corollary}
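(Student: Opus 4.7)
The plan is to derive the corollary as an essentially immediate specialization of Theorem 11. First I would observe that a single abductive explanation $\Delta$ is the degenerate case of the construction used in Theorem 11: taking the family $\Delta^{\bullet} = \{\Delta\}$, the disjunction $\lor\Delta^{\bullet}$ collapses to $\Delta$ itself, so $Mod(\lor\Delta^{\bullet}) = Mod(\Delta)$. The hypothesis of the corollary therefore says exactly that $Mod(KB\dotplus\alpha)$ is a non-empty subset of $Mod(\lor\Delta^{\bullet})$, which is the hypothesis of Theorem 11 applied to this particular $\Delta^{\bullet}$.

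Next, I would reuse the order constructed in the proof of Theorem 11. Concretely, define $\leq_{KB}$ on $\mathcal{S}$ by placing every interpretation in $Mod(KB)$ at the bottom (strictly below everything outside $Mod(KB)$), placing every interpretation in $Mod(KB\dotplus\alpha)$ as the strict minimum among $Mod(\{\alpha\}\cup IC)\setminus Mod(KB)$, and comparing the remaining interpretations of $\mathcal{S}$ arbitrarily in a total pre-order extending the above (for instance, all equivalent). This is well-defined because $\alpha$ is rejected in $KB$, so $Mod(KB)\cap Mod(\alpha) = \emptyset$ and the two ``bottom layers'' do not conflict.

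Then I would verify $(\leq 1)$ through $(\leq 5)$ exactly as in Theorem 11. Reflexivity and transitivity hold by construction since we are defining a total pre-order by stratification. Connectivity $(\leq 2)$ holds because the stratification is total. Faithfulness $(\leq 3)$ follows because $Mod(KB)$ is precisely the bottom stratum, and since $\alpha$ is rejected, no element of $Mod(KB)$ lies in $Mod(\{\alpha\}\cup IC)$, so the non-empty set $Mod(KB\dotplus\alpha)\subseteq Mod(\Delta)\subseteq Mod(\{\alpha\}\cup IC)$ is exactly the set of $\leq_{KB}$-minima of $Mod(\{\alpha\}\cup IC)$. Minimality $(\leq 4)$ follows from totality of the pre-order on a non-empty set, and preservance $(\leq 5)$ is inherited since the construction depends only on $KB$ through its models. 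Finally, $Mod(KB\dotplus\alpha) = Min(Mod(\{\alpha\}\cup IC), \leq_{KB})$ holds by construction.

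There is no real obstacle: the only thing to be careful about is ensuring that faithfulness is compatible with the requirement that the given $Mod(KB\dotplus\alpha)$ be exactly the selected minima; this is where the hypothesis ``$\alpha$ is rejected in $KB$'' is essential, since it guarantees that $Mod(KB)$ and $Mod(KB\dotplus\alpha)$ are disjoint and can be placed in separate strata without conflict. Hence the corollary follows as a direct application of Theorem 11 with $\Delta^{\bullet} = \{\Delta\}$. $\blacksquare$
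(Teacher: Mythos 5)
Your proof is correct and matches the paper's intent: the paper states this corollary without any explicit proof, treating it as an immediate specialization of the preceding theorem, whose own (sketchy) proof is exactly the stratified pre-order you describe --- models of $KB$ at the bottom, the chosen non-empty subset of $Mod(\Delta)$ strictly next, everything else above. Your observations that $\lor\{\Delta\}=\Delta$ and that rejection of $\alpha$ in $KB$ keeps the two bottom strata disjoint (which is what makes faithfulness compatible with the selection) are precisely what the paper relies on implicitly.
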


The precondition that $\alpha$ is rejected in $KB$ is not a serious
limitation in various applications such as database updates and
diagnosis, where close world assumption is employed to infer
negative information. For example, in diagnosis it is generally
assumed that all components are functioning normally, unless
otherwise there is specific information against it. Hence, a Horn
knowledge base in diagnosis either accepts or rejects normality of a
component, and there is no "don't know" third state. In other words,
in these applications the Horn knowledge base is assumed to be
complete. Hence, when such a complete Horn knowledge base is revised
by $\alpha$, either $\alpha$ is already accepted in $KB$ or rejected
in $KB$, and so the above scheme works fine.

\section{Related Works}

We begin by recalling previous work on view deletion. Chandrabose
\cite{Arav1,Arav}, defines a contraction operator in view deletion
with respect to a set of formulae or sentences using Hansson's
\cite{Hans2} belief change. Similar to our \cite{Del} approach, he
focused on set of formulae or sentences in Horn knowledge base
revision for view update wrt. insertion and deletion and formulae
are considered at the same level. Chandrabose proposed different
ways to change Horn knowledge base via only database deletion,
devising particular postulate which is shown to be necessary and
sufficient for such an update process.

Our Horn knowledge base consists of two parts, immutable part and
updatable part , but focus is on principle of minimal change. There
are more related works on that topic. Eiter \cite{Eit} is focusing
on revision from different perspective - prime implication.
Segerberg \cite{Seg} defined new modeling for belief revision in
terms of irrevocability on prioritized revision. Hansson
\cite{Hans2} constructed five types of non-prioritized belief
revision. Makinson \cite{Mak} developed dialogue form of revision
AGM. Papini\cite{Pap} defined a new version of Horn knowledge base
revision.

We are bridging gap between philosophical work, paying little
attention to computational aspects of database work \cite{Min,Sie}.
In such a case, Hansson's\cite{Hans2} kernel change is related with
abductive method. Aliseda's \cite{Alis} book on abductive reasoning
is one of the motivation keys. Christiansen's \cite{Chris1,Chris}
work on dynamics of abductive logic grammars exactly fits our
minimal change (insertion and deletion).

In general, our abduction theory is related to Horn knowledge base
dynamics (see how abduction theory is related with other
applications, respectively, reasoning \cite{Bar,Sad,Sak1},
update\cite{Sak3,Sak4}, equivalence{\cite{Ino,Sak1,Sak2} and problem
solving{\cite{Ino1,Lobo1}). More similar to our work is paper
presented by Bessant et al. \cite{Bess}, local search-based
heuristic technique that empirically proves to be often viable, even
in the context of very large propositional applications. Laurent et
al.\cite{Lau} parented updating deductive databases in which every
insertion or deletion of a fact can be performed in a deterministic
way.

Furthermore, and at a first sight more related to our work, some
work has been done on "core-retainment" (same as our immutable part)
in the model of language splitting introduced by Parikh \cite{Par}.
More recently, Doukari \cite{Dou}, \"{O}z\c{c}ep \cite{Ozc} and Wu,
et al. \cite{Wu} applied similar ideas for dealing with knowledge
base dynamics. These works represent motivation keys for our future
work. Second, we are dealing with how to change minimally in the
theory of "principle of minimal change", but current focuss is on
finding second best abductive explanation \cite{Lib} and 2-valued
minimal hypothesis for each normal program \cite{Pin}. Finally, when
we presented Horn knowledge base change in abduction framework, we
did not talk about compilability and complexity (see the works of
Liberatore \cite{Lib1} and Zanuttini \cite{Zan}).
\section{Conclusion}
The main contribution of this work lies in showing how abductive
framework deals with Horn knowledge base dynamics via belief change
operation. We consider the immutable part as defining a new logical
system. By a logical system, we mean that it defines its own
consequence relation and closure operator. We presented that
relationship of the coherence approach of $AGM$ with this
consequence operator is exactly same as our foundational approach,
when $IC$ is empty.

We believe that Horn knowledge base dynamics can also be applied to
other applications such as view maintenance, diagnosis, and we plan
to explore it in further works \cite{Caro}. Still, a lot of
developments are possible, for improving existing operators or for
defining new classes of change operators. As immediate extension,
question raises: is there any \emph{real life application for AGM in
25 year theory?} \cite{Ferme}. The revision and update are more
challenging in Horn knowledge base dynamic, so we can extend the
theory to combine results similar to Konieczny's \cite{Kon} and
Nayak's \cite{Nayak2}.

\section*{Acknowledgement}

The author acknowledges the support of RWTH Aachen, where he is
visiting scholar with an Erasmus Mundus External Cooperation Window
India4EU by the European Commission when the paper was written. I
would like to thanks Chandrabose Aravindan and Gerhard Lakemeyer
both my Indian and Germany PhD supervisor, give encourage to write
the paper.



\begin{thebibliography}{52}
\bibitem [1] {Alch} Alchourron, C.E., et al.(1985). On the logic of theory change: Partial meet
contraction and revision functions. \emph{Journal of Symbolic Logic
50}, 510 - 530.

\bibitem[2]{Alis} Aliseda, A. (2006). \emph{Abductive Resoning Logic Investigations into
Discovery and Explanation}. Springer book series Vol. 330.

\bibitem[3]{Alis1} Aliseda, A. (2007). Abductive Reasoning: Challenges Ahead. \emph{THEORIA}. \bf22\rm(3).

\bibitem[4] {Apt} Apt, K.P., $\&$ Bezem, M. (1990). Acyclic Programs. \emph{ICLP},
617-633.

\bibitem[5] {Ari} Arieli, O., et al. (2004). Coherent Integration of Databases by Abductive Logic Programming.
\emph{J. Artif. Intell. Res. (JAIR)} \bf21\rm, 245-286.

\bibitem[6] {Bar} Baral, B. Abductive reasoning through filtering. \emph{Artif.
Intell}. \bf120\rm(1), 1-28.

\bibitem[7]{Bess} Bessant, B., et al.(1998). Combining Nonmonotonic Reasoning and
Belief Revision: A Practical Approach. \emph{AIMSA}, 115-128.

\bibitem[8] {Bou} Boutilier, C., $\&$ Becher, V. (1995). Abduction as Belief Revision.
\emph{Artif. Intell}. \bf77\rm(1), 43-94.

\bibitem[9]{Caro} Caroprese, L., et al.(2012). The View-Update Problem for
Indefinite Databases. \emph{JELIA}.

\bibitem[10]{Arav1} Chandrabose, A., $\&$ Dung, P.M.(1994). Belief Dynamics, Abduction, and
Database. \emph{JELIA}, 66-85.

\bibitem[11]{Arav} Chandrabose, A.(1995). \emph{Dynamics of Belief: Epistmology,
Abduction and Database Update}. Phd Thesis, AIT.

\bibitem[12]{Chris1} Christiansen, H., $\&$ Martinenghi, D. (2006).
On Simplification of Database Integrity Constraints. \emph{Fundam.
Inform.} \bf71\rm(4), 371-417.

\bibitem[13]{Chris} Christiansen, H., $\&$ Dahl,V. (2009). Abductive Logic
Grammars. \emph{WoLLIC}, 170-181.

\bibitem[14] {Del} Delhibabu, R., $\&$ Lakemeyer, G. (2012). A Rational and E?cient Algorithm for View Revision in
Databases. \emph{Applied Mathematics \& Information Sciences}
(Accpted).

\bibitem[15] {Den} Denecker, M., $\&$ Kakas, A. C. (2002). Abduction in Logic Programming.
\emph{Computational Logic: Logic Programming and Beyond}, 402-436.

\bibitem[16] {Dou} Doukari, O., et al. (2008). A New Framework for
Local Belief Revision. \emph{MICAI}, 54-64.

\bibitem[17]{Eit} Eiter, T., $\&$ Makino,K. (2007). On computing all abductive
explanations from a propositional Horn theory. \emph{J. ACM}
\bf54\rm(5).

\bibitem[18]{Ferme} Fermé, E.L., $\&$ Hansson, S.O. (2011). AGM 25 Years -
Twenty-Five Years of Research in Belief Change. \emph{J.
Philosophical Logic} \bf40\rm(2),295-331.


\bibitem[19]{Hans1} Hansson, S.O. (1992). \emph{A dyadic representation of belief}.
Cambridge University Press Vol: 29, 89-121.

\bibitem[20]{Hans2} Hansson, S.O. (1997).\emph{A Textbook of Belief Dynamics}. Kluwer
Academic Publishers, Dordrecht.

\bibitem[21] {Herz} Herzig, A., $\&$ Rifi,O. (1999). Propositional Belief Base
Update and Minimal Change. \emph{Artif. Intell}, \bf115\rm(1),
107-138.

\bibitem[22] {Ino} Inoue, K., $\&$ Sakama, C. (2008). Comparing Abductive Theories.
\emph{ECAI}, 35-39.

\bibitem[23] {Ino1} Inoue, K., $\&$ Sakama, C. (2006). Abductive Equivalence in First-order
Logic. \emph{Logic Journal of the IGPL}, \bf14\rm (2), 333-346.


\bibitem[24] {Kon} Konieczny, S. (2011). Dynamics of Beliefs. \emph{SUM},
61-74.

\bibitem[25] {Lak} Lakemeyer, G. (1995). A Logical Account of
Relevance. \emph{IJCAI} (1), 853-861.

\bibitem[26] {Lau} Laurent, D., et al. (1998). Updating Intensional Predicates in
Deductive Databases. \emph{Data Knowl. Eng.} \bf26\rm(1), 37-70.

\bibitem[27] {Lib} Liberatore, P., $\&$ Schaerf, M. (2012). On the Complexity of Finding
Second-Best Abductive Explanations. \emph{CoRR abs/1204.5859}.

\bibitem[28] {Lib1} Liberatore, P., $\&$ Schaerf, M. (2007). Compilability of propositional
abduction. \emph{ACM Trans. Comput. Log} \bf8\rm(1).

\bibitem[29] {Lobo} Lobo, J., $\&$ Trajcevski, G. (1997).  Minimal and Consistent
Evolution of Horn knowledge bases, \emph{Journal of Applied
Non-Classical Logics} \bf7\rm(1).

\bibitem[30] {Lobo1} Lobo, J., $\&$ Uzc$\acute{a}$tegui, C. (1997). Abductive Consequence Relations.
\emph{Artif. Intell}, \bf 89\rm(1-2), 149-171.

\bibitem[31] {Mak} Makinson, D. (1997). Screened Revision, \emph{Theoria} \bf63\rm, 14-23.

\bibitem[32] {Min} Minker, J. (1996).  Logic and Databases: A 20 Year Retrospective. \emph{Logic in Databases},
3-57.

\bibitem[33] {Nayak1} Nayak, A., et al. (2006).  Forgetting and Knowledge
Update. \emph{Australian Conference on Artificial Intelligence},
131-140.

\bibitem[34] {Nayak2} Nayak, A. (2011).  Is Revision a Special Kind of
Update? \emph{Australasian Conference on Artificial Intelligence},
432-441.

\bibitem[35] {Nebel} Nebel, B. (1998). How Hard is it to Revise a Belief Base?
\emph{Handbook of Defeasible Reasoning and Uncertainty Management
Systems}, 77-145.

\bibitem[36] {Ozc} \"{O}z\c{c}ep, \"{O}, L. (2012). Knowledge-Base Revision Using Implications as
Hypotheses. \emph{KI}, 217-228.

\bibitem[37] {Par} Parikh, R. (1999). Beliefs, belief revision, and splitting languages.
\emph{Logic, language, and Computation} 2, 266–278.

\bibitem[38] {Pag} Pagnucco, M. (1996). \emph{The Role of Abductive Reasoning within the
Process of Belief Revision},  PhD Thesis, University of Sydney.

\bibitem[39] {Pap} Papini, O.(2000). Knowledge-base revision. \emph{The Knowledge
Engineering Review} \bf15\rm(4), 339 - 370.

\bibitem[40] {Pin} Pinto, A. M., $\&$ Pereira, L. M. (2011). Each normal logic
program has a 2-valued Minimal Hypotheses semantics, \emph{CoRR
abs/1108.5766}.

\bibitem[41] {Sad} Sadri. F., $\&$  Toni. F. (2005). Interleaving belief revision and reasoning: preliminary
report, \emph{CILC}.

\bibitem[42] {Sak1} Sakama, C. (2011). Dishonest Reasoning by Abduction.
\emph{IJCAI}, 1063-1064.

\bibitem[43] {Sak2} Sakama, C., $\&$ Inoue, K. (2009). Equivalence issues in abduction and induction.
\emph{J. Applied Logic}, \bf7\rm(3), 318-328.

\bibitem[44] {Sak3} Sakama, C., $\&$ Inoue, K. (2003). An abductive framework for computing
Horn knowledge base updates. \emph{TPLP} \bf3\rm(6), 671-713.

\bibitem[45] {Sak4} Sakama, C., $\&$ Inoue, K. (1999). Updating Extended Logic Programs
through Abduction. \emph{LPNMR}, 147-161.

\bibitem[46] {Schul} Schulte, O. (1999).  Minimal Belief Change and
Pareto-Optimality. \emph{ Australian Joint Conference on Artificial
Intelligence}, 144-155.

\bibitem[47] {Sch} Schurz, G. (2011). Abductive Belief Revision in
Science. \emph{Belief Revision Meets Philosophy of Science}, Logic,
Epistemology, and the Unity of Science, \bf21\rm, 77-104.

\bibitem[48] {Seg} Segerberg, K. (1998).  Irrevocable Belief Revision in Dynamic
Doxastic Logic. \emph{Notre Dame Journal of Formal Logic}
\bf39\rm(3), 287-306.

\bibitem[49] {Sie} Siebes, A., et al. (1996). Deductive Databases: Challenges, Opportunities and Future Directions
(Panel Discussion). \emph{Logic in Databases}, 225-229.

\bibitem[50] {Wall} Walliser, B., et al. (2005). Abductive Logics in a
Belief Revision Framework. \emph{Journal of Logic, Language and
Information} \bf14\rm(1), 87-117.

\bibitem[51] {Wu} Wu, M., et al. (2011). Language Splitting and
Relevance-Based Belief Change in Horn Logic. \emph{AAAI}, 268-273.

\bibitem[52] {Zan} Zanuttini, B. (2003). New Polynomial Classes for Logic-Based
Abduction. \emph{J. Artif. Intell. Res. (JAIR)}, \bf19\rm, 1-10.

\end{thebibliography}
\end{document}